\newcommand{\tabincell}[2]{\begin{tabular}{@{}#1@{}}#2\end{tabular}}
\newcommand{\cD}{\mathcal{D}} 
\newcommand{\cX}{\mathcal{X}} 
\newcommand{\cP}{\mathcal{P}} 
\newcommand{\Hm}[1]{\mathbf{H}_{\infty}\left(#1\right)}
\newcommand{\HmAv}[1]{\widetilde{\mathbf{H}}_{\infty}\left(#1\right)}
\newcommand{\Hmtr}[3]{\mathbf{H}^{\textup{Metric},#2,#3}_{}\left(#1\right)}
\newcommand{\HmtrAv}[3]{\widetilde{\mathbf{H}}^{\textup{Metric},#2,#3}_{}\left(#1\right)}
\newcommand{\HhllAv}[3]{\widetilde{\mathbf{H}}^{\textup{HILL},#2,#3}_{}\left(#1\right)}
\newcommand{\Hhll}[3]{\mathbf{H}^{\textup{HILL},#2,#3}_{}\left(#1\right)}
\newcommand{\HmtrdAv}[3]{\widetilde{\mathbf{H}}^{\textup{Metric-d},#2,#3}_{}\left(#1\right)}
\newcommand{\Habs}[3]{\mathbf{H}^{\textup{|Metric|},#2,#3}_{}\left(#1\right)}
\newcommand{\HabsAv}[3]{\widetilde{\mathbf{H}}^{\textup{|Metric|},#2,#3}_{}\left(#1\right)}
\newcommand{\Ev}[3]{\mathbf{E}_{#3 \leftarrow #2} #1\left( #3 \right)}
\newcommand{\EvAdd}[4]{\mathbf{E}_{#3 \leftarrow #2} #1\left( #3, #4 \right)}
\newcommand{\Max}{\mathrm{Max}}
\newcommand{\NP}{\mathbf{NP}}
\newcommand{\bigO}[1]{\mathcal{O}\left( #1 \right)}
\newcommand{\eqd}{\,{\buildrel d \over =}\,}
\begin{document}

\title{Modulus Computational Entropy \thanks{This work was partly supported by the WELCOME/2010-4/2 grant founded
within the framework of the EU Innovative Economy Operational
Programme.
} }
\author{Maciej Sk\'{o}rski \thanks{Cryptology and Data Security Group, University of Warsaw. Email: \email{maciej.skorski@gmail.com}}}
\institute{University of Warsaw}

\maketitle

\begin{abstract}
The so-called {\em leakage-chain rule} is a very important tool used in many security proofs. It gives an upper bound on the entropy loss of a random variable $X$ in case the adversary who having already learned some random variables $Z_{1},\ldots,Z_{\ell}$ correlated with $X$, obtains some further information $Z_{\ell+1}$ about $X$. Analogously to the information-theoretic case, one might expect that also for the \emph{computational} variants of entropy the loss depends only on the actual leakage, i.e. on $Z_{\ell+1}$. Surprisingly, Krenn et al.\ have shown recently that for the most commonly used definitions of computational entropy this holds only if the computational quality of the  entropy deteriorates exponentially in $|(Z_{1},\ldots,Z_{\ell})|$. This means that the current standard definitions of computational entropy do not allow to fully capture leakage that occurred ''in the past", which severely limits the applicability of this notion. 

As a remedy for this problem we propose a slightly stronger definition of the computational entropy, which we call the  \emph{modulus computational entropy}, and use it as a technical tool that allows us to prove a desired chain rule that depends only on the actual leakage and not on its history. Moreover, we show that the  modulus computational entropy unifies other,sometimes seemingly unrelated, notions already studied in the literature in the context of information leakage and chain rules. Our results indicate that the modulus entropy is, up to now, the weakest restriction that guarantees that the chain rule for the computational entropy works. As an example of application we demonstrate a few  interesting cases where our restricted definition is fulfilled and the chain rule holds.
\end{abstract}

\pagestyle{headings}
\setcounter{page}{1}

\section{Introduction}
Entropy is the most fundamental concept in Information Theory. First introduced in this context by Shannon \cite{Shannon1948}, as a measure of the uncertainty associated with a probability distribution, it has been generalized in many ways. The commonly used generalization of Shannon Entropy is R\'{e}nyi Entropy, defined for any arbitrary nonnegative order, which includes Shannon Entropy as a special case of order 1. Informally, a reasonable entropy measure indicates for a given distribution how much randomness it contains. According to this intuition, distributions uniform over large sets should have very high entropy, in opposite to distributions which has small support or hit a small set with high probability, being easy to predict. 

\paragraph{Indistinguishability and entropy.} The notion of entropy has been generalized also for the purpose of Computational Complexity Theory and Cryptography, to take \emph{computational} aspects into account. The reader might wish to refer to \cite{Reyzin2011} for a short survey.\ Historically computational entropy was first introduced in \cite{Yao1982} and, basing on a different concept, in \cite{HILL99}. This last approach, based on the notion of \emph{indistinguishability}, is the one we follow in this work. Let us try to give some intuitions here (the precisely definitions will be given in Section 2).\ To define computational entropy of $X$, one relaxes the requirement that $X$ should have entropy itself. Instead, we assume that $X$ is only close to a distribution $Y$ which has suitable information-theoretic entropy. We have to specify two things: (a) the notion of entropy we use and (b) what does it mean "being close". 
To give a rigorous formulation of (b), one uses the  concept of \emph{distinguishing}, borrowed from convex analysis and topology. Function $D$ separates (\emph{distinguishes}) a set $\mathbb{X}$ from another set $\mathbb{Y}$ with \emph{advantage} at least $\epsilon$ if $D(x) - D(y) \geqslant \epsilon$ for every $x\in\mathbb{X},\,y\in\mathbb{Y}$. In turn, for a predefined class $\cD$ of functions, two sets are said to be $\left(\cD,\epsilon\right)$-indistinguishable, if there is no $D\in\cD$ that can distinguish between these two sets with advantage greater than $\epsilon$. The smaller $\epsilon$ and the wider class $\cD$ we take, the stronger indistingusihability we obtain. Especially, 
indistinguishability applied to two probability 
distributions (as one-element sets) and all boolean functions (as distinguishers), where acting $D$ on a  distribution $\mathbf{P}_{X}$ is defined by $D\left(\mathbf{P}_{X}\right) = \mathbf{E}_{x\leftarrow X}D(x)$, yields the definition of the statistical distance. In applications involving computational complexity, one usually use circuits of bounded size as a class of distinguishers.

\paragraph{Leakage Lemma and Chain Rule.}
Leakage lemma is the term commonly used in referring to various generalizations of the observation which, saying less formally, states that min-entropy of a distribution $X$ conditioned on another distribution $Z$ distributed over $\{0,1\}^{m}$ decreases, with respect to min-entropy of $X$, by at most $m$ (the number of bits in the string encoding $Z$). 
The name comes from security-related applications, where one considers entropy of a distribution conditioned on information that might have been revealed to the adversary. The larger difference between entropy of a distribution and entropy of the corresponding conditioned distribution, the larger leakage is; such an approach, based on computational entropy, was used first by Dziembowski and Pietrzak \cite{Dziembowski2008}. In turn, the term leakage chain rule is used to state the same principle for the case when we are given entropy of $X$ conditioned on $Z_1$, and observing some further leakage $Z_2$ ask for the entropy of $X$ conditioned on $Z_1 Z_2$. Such conditioning of an already conditioned distribution refers to the so called "leakage-after-leakage" scenario. The name ``Leakage Chain Rule'' comes from the fact that we think of $Z_1$ and $Z_2$ as information about $X$ that ``leaked'' subsequently to the adversary.

For commonly used information-theoretic notions of conditional entropy, the chain rule is known to be true, i.e. the loss in entropy depends on $\left|Z_2\right|$ and not on $Z_1$. The problems appear in the case of computational generalizations of entropy. The computational leakage lemma \cite{Dziembowski2008,Fuller2011}, turned out not to give rise naturally to the leakage chain rule at least for important indistinguishability based definitions of conditional computational entropy and was addressed as an open problem \cite{FullerReyzin2012}. The computational leakage chain rule was proved only for specific scenarios, either by adding strong assumptions to definitions \cite{Fuller2011, Chung2011}, or by changing definitions (see \cite{Reyzin2011} for the discussion of computational relaxed entropy based on Leakage Lemma \cite{GentryWichs2010}). Recently, a counterexample to the chain rule for computational min entropy has been found \cite{Pietrzak2013}. It shows that the computational entropy of $X|Z_1 Z_2$ can decrease dramaticaly with respect to the entropy of $X|Z_1$, even if $Z_2$ is just a one bit.
\paragraph{Our contribution.}
Interested in establishing the (possibly) weakest condition to make the leakage chain rule work for the 'standard'  computational entropy (i.e. defined using indistinguishablity and the min-entropy), we define the modulus computational entropy and show that its definition is satisfied by technical assumptions which have been used by other authors to obtain a chain rule: the decomposable entropy introduced by Fuller and Reyzin \cite{Fuller2011} and the samplability assumption used by Chung et al.~in \cite{Chung2011}. Interestingly, it is implied by  the ''squared-indistinguishability" introduced in \cite{Dodis2013}. Furthermore, we investigate three cases that has not been considered yet: (a) when computational entropy is almost maximal, (b) the existence of an $\NP$ oracle over the domain of $X$ to which distinguishers are given access \footnote{We stress that this is a non-trivial result, as the computational entropy $X$ given $Z$ is calculated by distiguishers on $\{0,1\}^{n+m}$, thus it might happen that even circuits of size $2^{n}$ are not able to break it.}, and (c) when the leakage is relatively short. In all these cases our definition is fulfilled and the chain rule works. Summing up, we reduce already known necessary conditions to the one simpler concept and show a few new non-trivial cases where the chain rule works.
\paragraph{Our techniques}  
We observe that to ensure the chain rule, one need to control \emph{the conditional advantages of distinguishers}, i.e. advantages calculated conditionally on appropriate events. The same concept appears in \cite{Dodis2013}.
This elementary technique leads to quite non-trivial results and we believe that its application can be of independent interest.
\paragraph{Outline of the work.} 
Section 2 deals with some preliminary concepts, conventions and notations. In Section 3 we explain basic definitions and terminology related to the computational entropy, and discuss the positive and negative results related to the leakage chain rule problem. In Section 4 we introduce our main tool-- the modulus entropy and show that  the leakage chain rule holds for this notion. Section 5 contains a brief summary of the most important consequences of our results - estimating the cost of conversions to the modulus entropy from several technical assumptions. Section 6 provides their proofs. 

\section{Preliminaries} \label{Preliminaries}

Throughout this work we assume that all random variables are defined on some finite probability space and they take values in $\{0,1\}^{*}$. If $X$ is a random variable then $\mathbf{P}_{X}$ will be its distribution.  
Writing $X\in S$ we mean that $X$ takes its values in the set $S$. By $|S|$ we denote the cardinality of $S$. For two random variables $X,Z$ by $X|Z=z$ we denote the distribution of $X$ conditioned on $Z=z$ and $(X,Z)$ means the concatenation of $X$ and $Z$. For every $n$, by $U_{n}$ we denote the uniform distribution over $\{0,1\}^{n}$. By $\left(\mathrm{det}\{0,1\},s\right)$ and $\left(\mathrm{\det}[0,1],s\right)$ we mean the class of all deterministic circuits of size at most $s$, with output in the set $\{0,1\}$ and $[0,1]$ respectively. Similarly, we denote by $\left(\mathrm{rand}\{0,1\},s\right)$ the set of all randomized boolean circuits of size at most $s$. All logarithms are taken 
to the base $2$. For $D : \cX \rightarrow [0,1]$ and $k \leq \log|\cX|$ we denote by $\Max^k_{D} \subseteq \cX$ a set of cardinality $2^k$ such that for every $x \in \Max^k_{D}$ and every $x' \not\in \Max^k_{D}$ we have $D(x) \geqslant D(x')$. For a boolean function $D$, we write $|D| = \sum_{x\in X}D(x)$.

\subsection{Min Entropy}
We start with recalling information-theoretic notions.
\begin{definition}[Min Entropy]
A random variable $X$ has at least $k$ bits of min-entropy, denoted by $\Hm{X}\geqslant k$, if and only if $\max_{x}\mathbf{P}_X\left(x\right)\leqslant 2^{-k}$. 
\end{definition}
\noindent The \emph{conditional} min-entropy can be defined in two ways, both compatible with the above definition.
The first one is given below. 

\begin{definition}[Worst-Case Conditional Min-Entropy]
Given a pair of random variables $(X,Z)$ we say that\emph{ $X$ conditioned on $Z$ has the min-entropy at least $k$} and denote it by $\mathbf{H}_{\infty}(X|Z) \geqslant k$, if and only if for every $z$ we have 
\begin{displaymath} \max\limits_{x}\mathbf{P}_{X|Z=z}\left(x\right)\leqslant 2^{-k}. \end{displaymath}
\end{definition}

\noindent
It is called the \emph{worst-case} because it requires $X$ to have high min-entropy when it is conditioned on the event ``$Z=z$''  for {\em every} $z$.  The alternative definition requires this fact to hold {\em on average}:
\begin{definition}[Average Conditional Min-Entropy \cite{Dodis2008}]
Given a pair of random variables $(X,Z)$ we say that \emph{$X$ conditioned on $Z$ has the average min-entropy at least $k$} and denote $\widetilde{\mathbf{H}}_{\infty}\left(X|Z\right) \geqslant k$,  if and only if \begin{displaymath}\mathbf{E}_{z\leftarrow Z}\left[ \max\limits_{x}\mathbf{P}_{X|Z=z}\left(x\right) \right]\leqslant 2^{-k}. \end{displaymath} 
\end{definition}
\noindent Usually it is not so important which one of these definitions is used, as one can convert the average conditional min entropy into the worst-case variant.
\begin{lemma}[See \cite{Dodis2008}, Lemma 2.2]\label{Average_WorstCase_Conversion}
Suppose that $\widetilde{\mathbf{H}}_{\infty}\left(X|Z\right) \geqslant k$. Then holds $\mathbf{H}_{\infty}\left(X|Z=z\right) \geqslant k-\log\frac{1}{\delta}$ with probability at least $1-\delta$ over $z\leftarrow Z$.  
\end{lemma}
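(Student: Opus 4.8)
The plan is to apply Markov's inequality to the ``guessing probability'' of $X$ viewed as a random function of $Z$. First I would introduce, for each value $z$ in the range of $Z$, the quantity $g(z) = \max_x \mathbf{P}_{X|Z=z}(x)$, the maximal probability of correctly predicting $X$ once $Z=z$ has been observed. This $g$ is a nonnegative function of $z$, and by the definition of average conditional min-entropy the hypothesis $\widetilde{\mathbf{H}}_{\infty}\left(X|Z\right) \geqslant k$ is exactly the statement $\mathbf{E}_{z\leftarrow Z}\left[g(z)\right] \leqslant 2^{-k}$. Symmetrically, unwinding the definition of worst-case conditional min-entropy shows that the desired conclusion $\mathbf{H}_{\infty}\left(X|Z=z\right) \geqslant k-\log\frac{1}{\delta}$ is equivalent to the pointwise bound $g(z) \leqslant 2^{-\left(k-\log\frac{1}{\delta}\right)} = 2^{-k}/\delta$, using $2^{-\left(k-\log\frac{1}{\delta}\right)} = 2^{-k}\cdot 2^{\log\frac{1}{\delta}} = 2^{-k}/\delta$.

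The core step is then a single application of Markov's inequality. Since $g(z)\geqslant 0$ and $\mathbf{E}_{z\leftarrow Z}\left[g(z)\right]\leqslant 2^{-k}$, for every threshold $t>0$ we have $\Pr_{z\leftarrow Z}\left[g(z)\geqslant t\right] \leqslant \mathbf{E}_{z\leftarrow Z}\left[g(z)\right]/t \leqslant 2^{-k}/t$. Choosing $t=2^{-k}/\delta$ yields $\Pr_{z\leftarrow Z}\left[g(z)\geqslant 2^{-k}/\delta\right] \leqslant \delta$, so with probability at least $1-\delta$ over $z\leftarrow Z$ we obtain $g(z) < 2^{-k}/\delta = 2^{-\left(k-\log\frac{1}{\delta}\right)}$. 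Translating this bound back through the equivalence established in the first step gives $\mathbf{H}_{\infty}\left(X|Z=z\right)\geqslant k-\log\frac{1}{\delta}$ with probability at least $1-\delta$, as claimed.

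I do not expect a substantial obstacle here: the entire content is the observation that average conditional min-entropy is the negative logarithm of an expectation of the guessing probabilities $g(z)$, and that a bound on the expectation of a nonnegative quantity converts into a high-probability pointwise bound via Markov. The only points demanding a little care are bookkeeping ones, namely reading the definitions so that the factor $2^{-k}$ on the entropy side lines up with the expectation, and noting that the strict inequality produced by Markov still delivers the non-strict inequality required by the min-entropy definition.
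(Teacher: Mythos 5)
Your proof is correct. The paper gives no proof of this lemma at all --- it imports it directly from \cite{Dodis2008} (Lemma 2.2) --- and your argument, applying Markov's inequality to the guessing probability $g(z)=\max_x \mathbf{P}_{X|Z=z}(x)$ whose expectation is bounded by $2^{-k}$, is precisely the standard proof from that reference, including the correct handling of the strict-versus-non-strict inequality at the threshold $t=2^{-k}/\delta$.
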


\subsection{Indistinguishability}\label{sec:indist}
\noindent Below we outline the concept of \emph{indistinguishability}, being a key point in defining computational entropy.
\begin{definition}\label{Distinguishability_Def}
Let $\mathbb{X}$ and $\mathbb{Y}$ be subsets of some set $\mathcal{P}$. Given $\epsilon >0$ we say that a function $F:\,\mathcal{P}\rightarrow [0,1]$ \emph{distinguishes between $\mathbb{X}$ and $\mathbb{Y}$ with advantage at least $\epsilon$} if for every $x \in \mathbb{X}$ and every $y\in\mathbb{Y}$ we have 
$ \left| F(x) - F(y) \right| \geqslant \epsilon. $
\end{definition}
\begin{definition}\label{Indistinguishability_Def}
Let $\mathbb{X}$ and $\mathbb{Y}$ be as in Def. \ref{Distinguishability_Def}.\ Given a class $\mathcal{F}$ of $[0,1]$-valued functions on $\mathcal{P}$, we say that \emph{ $\mathbb{X}$ and $\mathbb{Y}$ are $\left(\mathcal{F},\epsilon\right)$-indistinguishable} if there is no $F\in\mathcal{F}$ that can distinguish between $\mathbb{X}$ and $\mathbb{Y}$ with advantage greater than $\epsilon$.
\end{definition}
\noindent 
In this paper we are mostly interested in the case when $\cP$ is equal to the set of all probability distributions over some finite space $\Omega$.  In this case, every function $D : \Omega \rightarrow [0,1]$ gives rise to a distinguisher $F_D : \cP \rightarrow [0,1]$ defined as $F_D(\mu) = \mathbf{E}_{x\leftarrow \mu} D(x)$.
Thus, we will overload the notation and say that \emph{$D$ distinguishes between $\mathbb{X}$ and $\mathbb{Y}$ with advantage at least $\epsilon$} if the corresponding function $F_D$ distinguishes between $\mathbb{X}$ and $\mathbb{Y}$ with advantage at least $\epsilon$.   We note that $D$ can also be a \emph{randomized} function, which receives an additional input $R$ chosen independently at random. The  expectation $\mathbf{E}D(\cdot)$ is then taken also over $R$. 

\section{Computational Entropy and Leakage - previous works}
As mentioned before, computational entropy can be obtained by generalizing entropy notions in many ways. We follow the approach based on indistinguishability as it seems to be the most standard way and was originally used for studying leakage \cite{Dziembowski2008} as well as in further leakage-related works \cite{Chung2011,Fuller2011,GentryWichs2010}. 
\subsection{Defining Computational Entropy}
\paragraph{Three-layer definition.}
There are three key points, essential for defining computational entropy via indistinguishability: 
\begin{enumerate}[{} (a)]
 \item \label{AbstractEntropy:a} specify, for every $k$, what it means that a distribution ``has (non-computational) entropy at least $k$'',
 \item \label{AbstractEntropy:b} model the adversary, in particular define his computational power, and determine his maximal acceptable success probability, and
 \item \label{AbstractEntropy:c} define the measure of the ``computational distance'' between a given distribution and the set of distributions with entropy at least $k$ (in the sense of (\ref{AbstractEntropy:a})).
\end{enumerate}
In (\ref{AbstractEntropy:a}) one usually uses information-theoretic notion of entropy, most often the min-entropy\footnote{We use only min-entropy in this work. See, however, \cite{Vadhan2012} for a similar definition based on Shanon Entropy.}. For (\ref{AbstractEntropy:b}) one uses a pair $\left(\mathcal{D},\epsilon\right)$ within the framework described in Section  \ref{sec:indist}. Finally, a rigorous formulation of (\ref{AbstractEntropy:c}) can be given in two ways, traditionally called the ``HILL'' or the ``Metric'' version. In the HILL version, defining entropy of a random variable $X$, we require $X$ to be indistinguishable from \emph{one single} distribution with high entropy (in the sense of (\ref{AbstractEntropy:a})), whereas in the definition of the Metric Entropy we require $X$ to be indistinguishable from the set of \emph{all} of high-entropy distributions, which is a bit \emph{weaker} assumption.  The formal definitions below are provided for the conditional versions of both notions. The 
unconditional versions, denoted $\Hhll{X}{\mathcal{D}}{\epsilon}$  and $\Hmtr{X}{\mathcal{D}}{\epsilon}$, are special cases obtained by fixing in the definitions below $Z$ to be constant.

\begin{definition}[HILL Computational Conditional Entropy \cite{Reyzin2007}]\label{HILLEntropy_Definition}
Let $X,Z$ be random variables taking values in $\{0,1\}^{n}$ and $\{0,1\}^{m}$ respectively.
Given $\epsilon > 0$ and a class of distinguishers $\cD$, we say that \emph{$X$ conditioned on $Z$ has at least $k$ bits of computational HILL entropy against $\left(\cD,\epsilon\right)$} and denote by $\HhllAv{X|Z}{\mathcal{D}}{\epsilon}\geqslant k$, if \emph{there exists} a random variable $Y\in\{0,1\}^{n}$ satisfying $\HmAv{Y|Z}\geqslant k$, such that $(X,Z)$ is $\left(\mathcal{D},\epsilon\right)$-indistinguishable from $(Y,Z)$ .
\end{definition}

\begin{definition}[Metric Computational Conditional Entropy \cite{Reyzin2007}]\label{MetricEntropy_Definition}
With $\epsilon,\cD,X$ and $Z$ as in Def.\ \ref{HILLEntropy_Definition}, we say that \emph{$X$ conditioned on $Z$ has at least $k$ bits of computational metric entropy against $\left( \cD,\epsilon\right)$}, denoting $\HmtrAv{X|Z}{\cD}{\epsilon}\geqslant k$, if $(X,Z)$ is $\left(\mathcal{D},\epsilon\right)$-indistinguishable from the set of \emph{all} distributions $(Y,Z)$, satisfying $\HmAv{Y|Z}\geqslant k$.
\end{definition}

\noindent Usually one formulates both definitions more explicity without using the very general notion of indistinguishability (as in Def. \ref{Indistinguishability_Def})
\begin{enumerate}[{} ]
\item[] Definition \ref{HILLEntropy_Definition}: $\HhllAv{X|Z}{\mathcal{D}}{\epsilon}\geqslant k$ if there exists a random variable $Y\in\{0,1\}^{n}$, $\HmAv{Y|Z}\geqslant k$ satisfying
$\left| \mathbf{E} D(X,Z) - \mathbf{E} D(Y,Z) \right| \leqslant \epsilon$ for all $D\in\mathcal{D}$. 
\item[] Definition \ref{MetricEntropy_Definition}: $\HmtrAv{X|Z}{\mathcal{D}}{\epsilon}\geqslant k$ if for every $D\in\mathcal{D}$ there exists a random variable $Y\in\{0,1\}^{n}$, $\HmAv{Y|Z}\geqslant k$ and $\left| \mathbf{E} D(X,Z) - \mathbf{E} D(Y,Z) \right| \leqslant \epsilon$. 
\end{enumerate}
However, our, more general, definitional approach appears to be more useful for the applications presented in the sequel. The definitions of the HILL Computational \emph{Worst}-Case Conditional Entropy $\Hhll{X|Z}{\cD}{\epsilon}$ and the Metric Computational \emph{Worst}-Case Conditional Entropy $\Hmtr{X|Z}{\cD}{\epsilon}$ are obtained by replacing $\HmAv{Y|Z}\geqslant k$ in Def.\ \ref{HILLEntropy_Definition} and Def.\ \ref{MetricEntropy_Definition} (resp.) with $\Hm{Y|Z} \geqslant k$.  



\paragraph{The equivalence between the HILL and Metric-type Entropy.}
It has been observed that the Metric Entropy is more convenient for proving leakage-related results and, in fact,  appears in all such proofs more or less implicitly. Fortunately, there exists a conversion from the Metric Entropy (against real-valued circuits) to HILL Entropy \cite{Barak2003}. This result in its full generality can be stated as follows:
\begin{theorem}[Generalization of \cite{Barak2003}, Thm. 5.2]\label{MetricHILL_Conversion}
Let $\cP$ be the set of all probability measures over $\Omega$. Suppose that we are given a class $\mathcal{D}$ of $[0,1]$-valued functions on $\Omega$, with the following property: if $D\in \cD$ then $D^{c}=^{\textup{def}}\mathbf{1}-D \in \cD$. For $\delta > 0$, let $\mathcal{D'}$ be the class consisting of all convex combinations of length $\mathcal{O}\left(\frac{\log|\Omega|}{\delta^2}\right)$ over $\mathcal{D}$. Let $\mathcal{C}\subset \cP$ be any arbitrary convex subset of probability measures and $X \in \cP$ be a fixed distribution. Consider the following statements:
\begin{enumerate}[{} (a)]
\item \label{itm:HILLMetricCoversion_a} $X$ is $\left(\mathcal{D},\epsilon+\delta\right)$ indistinguishable from \emph{some} distribution $Y\in \mathcal{C}$
 \item \label{itm:HILLMetricCoversion_b} $X$ is $\left(\mathcal{D'},\epsilon\right)$ indistinguishable from the set of \emph{all} distribution $Y\in \mathcal{C}$
\end{enumerate}
Then (\ref{itm:HILLMetricCoversion_b}) implies (\ref{itm:HILLMetricCoversion_a}).  
\end{theorem}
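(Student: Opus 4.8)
The plan is to prove the contrapositive: assuming that \eqref{itm:HILLMetricCoversion_a} fails, I will construct a distinguisher in $\mathcal{D}'$ that witnesses the failure of \eqref{itm:HILLMetricCoversion_b}. So suppose that for \emph{every} $Y\in\mathcal{C}$ there is some $D\in\mathcal{D}$ distinguishing $X$ from $Y$ with advantage strictly greater than $\epsilon+\delta$. The first reduction is to remove the absolute value: since $\mathcal{D}$ is closed under $D\mapsto D^{c}=\mathbf{1}-D$, and $\mathbf{E}D^{c}(\mu)=1-\mathbf{E}D(\mu)$, the signed quantity $\mathbf{E}D(X)-\mathbf{E}D(Y)$ can always be made positive by replacing $D$ with $D^{c}$ if necessary. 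Hence the assumption reads: for every $Y\in\mathcal{C}$ one has $\max_{D\in\mathcal{D}}\bigl(\mathbf{E}D(X)-\mathbf{E}D(Y)\bigr)>\epsilon+\delta$.

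Next I would set up the zero-sum game in which the distinguisher picks a mixed strategy, i.e.\ a (real) convex combination $\bar D=\sum_i\lambda_i D_i$ over $\mathcal{D}$, the adversary picks $Y\in\mathcal{C}$, and the payoff is the bilinear form $\mathbf{E}\bar D(X)-\mathbf{E}\bar D(Y)$. Because pure strategies are available, the assumption gives $\min_{Y\in\mathcal{C}}\max_{\bar D}\bigl(\mathbf{E}\bar D(X)-\mathbf{E}\bar D(Y)\bigr)\geqslant\epsilon+\delta$. The key step is to exchange the quantifiers using the minimax theorem, yielding a single real convex combination $\bar D$ such that $\mathbf{E}\bar D(X)-\mathbf{E}\bar D(Y)\geqslant\epsilon+\delta$ for \emph{all} $Y\in\mathcal{C}$ simultaneously. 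This is the main obstacle, since applying von Neumann's (or Sion's) minimax theorem requires convexity and compactness of both strategy spaces. Convexity of $\mathcal{C}$ is assumed; for the distinguisher side I would observe that the payoff depends on $\bar D$ only through the finite-dimensional value vector $\bigl(\bar D(x)\bigr)_{x\in\Omega}$, so one may work inside the closed convex hull of $\{(D(x))_{x\in\Omega}:D\in\mathcal{D}\}$ in the compact cube $[0,1]^{\Omega}$; compactness of $\mathcal{C}$ follows from taking it closed (the set of high-min-entropy distributions is closed). This finite-dimensionalization is what makes the minimax argument go through even when $\mathcal{D}$ is infinite.

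The final step is a sparsification of $\bar D$ to an element of $\mathcal{D}'$. I would draw $D_1,\dots,D_L$ i.i.d.\ according to the mixing weights $\lambda$ and set $\hat D=\frac1L\sum_{j=1}^{L}D_{j}$, a uniform convex combination of length $L$. For each fixed $x\in\Omega$ the value $\hat D(x)$ is an average of $[0,1]$-valued independent samples with mean $\bar D(x)$, so a Chernoff/Hoeffding bound gives $\Pr\bigl[|\hat D(x)-\bar D(x)|>\delta/2\bigr]$ exponentially small in $L\delta^{2}$. Choosing $L=\bigO{\frac{\log|\Omega|}{\delta^{2}}}$ and taking a union bound over the $|\Omega|$ points of $\Omega$ makes the probability of any large deviation strictly less than $1$, so some fixed realization satisfies $\|\hat D-\bar D\|_{\infty}\leqslant\delta/2$. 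Since $|\mathbf{E}\hat D(\mu)-\mathbf{E}\bar D(\mu)|\leqslant\|\hat D-\bar D\|_{\infty}$ for every distribution $\mu$, the sparse distinguisher inherits advantage $\mathbf{E}\hat D(X)-\mathbf{E}\hat D(Y)\geqslant(\epsilon+\delta)-\delta=\epsilon$ against all $Y\in\mathcal{C}$; tightening the approximation error to be strictly below $\delta/2$ (or using the strict slack already present in the assumption) upgrades this to advantage $>\epsilon$. As $\hat D$ is a length-$L$ convex combination over $\mathcal{D}$, it lies in $\mathcal{D}'$, which contradicts \eqref{itm:HILLMetricCoversion_b} and completes the contrapositive.
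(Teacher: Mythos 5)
Your proposal is correct and takes essentially the same route as the paper: the paper proves this theorem merely by appealing to the original argument of Barak et al.\ (Thm.\ 5.2), which is precisely your contrapositive --- removing the absolute value via closure under $D \mapsto \mathbf{1}-D$, exchanging quantifiers by the minimax theorem over mixed distinguisher strategies versus the convex set $\mathcal{C}$, and then Chernoff--Hoeffding sparsification with a union bound over $\Omega$ to get a convex combination of length $\mathcal{O}\left(\frac{\log|\Omega|}{\delta^2}\right)$. One small point to tidy: since the theorem assumes $\mathcal{C}$ is an arbitrary convex set (not necessarily closed), you should place the compactness needed for the minimax step on the distinguisher side (the closed convex hull of value vectors in $[0,1]^{\Omega}$, via Sion's theorem, which requires compactness of only one strategy set), rather than on $\mathcal{C}$.
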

This more general statement follows by inspection of the original proof.
\begin{remark}\label{MetricHILL_Conversion_Discussion}
By choosing $\Omega = \{0,1\}^{n+m}$, a random variable $Z\in\{0,1\}^{m}$ and $\mathcal{C}$ to be the set of all distributions $(Y,Z)$ satisfying $\Hm{Y|Z}\geqslant k$ or alternatively 
$\HmAv{Y|Z}\geqslant k$, we obtain the conversion from Metric Conditional Entropy to HILL Conditional Entropy, for both: the worst and average case variants.
\end{remark}



\subsection{Leakage Rules} 
We are now ready to state the leakage chain rule for conditional min-entropy and compare it with its known generalizations to computational case.  Generally, we are interested in the following problem: 
\begin{quote}
Suppose we have a pair of random variables $(X,Z_1)$ and we know the conditional entropy of $X$ given $Z_1$.  What is the lower bound on the entropy of $X$ given $(Z_1,Z_2)$, where $Z_2$ is some other (possibly correlated) random variable? 
\end{quote}
In the information-theoretic case we have the following result  (cf. \cite{Dodis2008}, Lemma 2.2)
\begin{lemma}[Leakage Chain Rule]\label{ChainRule_MinEntropy}
Let $X,Z_1,Z_2$ be random variables over $\{0,1\}^{n},\{0,1\}^{m_1},\{0,1\}^{m_2}$ respectively. Then
\begin{equation}\label{eq:chain}
\widetilde{\mathbf{H}}_{\infty}\left(X|Z_1,Z_2\right) \geqslant \widetilde{\mathbf{H}}_{\infty}\left(X|Z_1\right)-m_2
\end{equation}
\end{lemma}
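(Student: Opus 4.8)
The plan is to translate the entropy inequality into an equivalent statement about guessing probabilities. By definition of average conditional min-entropy, $2^{-\HmAv{X|Z}}$ equals the optimal probability of predicting $X$ from $Z$, namely $\mathbf{E}_{z\leftarrow Z}\left[\max_x \mathbf{P}_{X|Z=z}(x)\right]$. Hence, after negating and exponentiating, inequality (\ref{eq:chain}) is equivalent to
\begin{equation*}
\mathbf{E}_{(z_1,z_2)\leftarrow (Z_1,Z_2)}\left[\max_x \mathbf{P}_{X|Z_1=z_1,Z_2=z_2}(x)\right] \leqslant 2^{m_2}\cdot \mathbf{E}_{z_1\leftarrow Z_1}\left[\max_x \mathbf{P}_{X|Z_1=z_1}(x)\right],
\end{equation*}
so it suffices to prove this bound.

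First I would unfold the left-hand side. Using that $\mathbf{P}_{Z_1,Z_2}(z_1,z_2)\cdot\mathbf{P}_{X|Z_1=z_1,Z_2=z_2}(x)=\mathbf{P}_{X,Z_1,Z_2}(x,z_1,z_2)$, the expectation rewrites as $\sum_{z_1,z_2}\max_x \mathbf{P}_{X,Z_1,Z_2}(x,z_1,z_2)$, i.e.\ a sum, over each observable pair $(z_1,z_2)$, of the weight of the most likely value of $X$.

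The key step is a pointwise bound: since marginalizing out $Z_2$ can only increase a joint probability, for every $(x,z_1,z_2)$ we have $\mathbf{P}_{X,Z_1,Z_2}(x,z_1,z_2)\leqslant \mathbf{P}_{X,Z_1}(x,z_1)$, and therefore $\max_x \mathbf{P}_{X,Z_1,Z_2}(x,z_1,z_2)\leqslant \max_x \mathbf{P}_{X,Z_1}(x,z_1)$. Summing this over the at most $2^{m_2}$ values that $Z_2$ can take produces the factor $2^{m_2}$, and the remaining sum over $z_1$ collapses to $\sum_{z_1}\max_x \mathbf{P}_{X,Z_1}(x,z_1)=\mathbf{E}_{z_1\leftarrow Z_1}\left[\max_x \mathbf{P}_{X|Z_1=z_1}(x)\right]=2^{-\HmAv{X|Z_1}}$, which is exactly the right-hand side.

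The argument is essentially bookkeeping, so I do not anticipate a serious obstacle; the one place to be careful is the counting step, where it is crucial that the number of distinct values of $Z_2$ is bounded by $2^{m_2}$ (and not by anything depending on $Z_1$). This is precisely the origin of the additive $m_2$ loss and explains why the information-theoretic rule depends only on the bit-length of the \emph{fresh} leakage $Z_2$ and not on its history $Z_1$ --- the very property that, as discussed above, fails for the computational analogues.
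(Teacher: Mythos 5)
Your proof is correct: the reduction to guessing probabilities, the pointwise bound $\mathbf{P}_{X,Z_1,Z_2}(x,z_1,z_2)\leqslant \mathbf{P}_{X,Z_1}(x,z_1)$, and the counting over the at most $2^{m_2}$ values of $Z_2$ together give exactly the claimed inequality, and this is essentially the standard argument. Note that the paper itself does not prove this lemma but cites it from Dodis et al.\ (Lemma~2.2 there), whose proof proceeds by the same bookkeeping, so your argument matches the intended one.
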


\noindent

In the computational framework, the first leakage-related result appeared in \cite{Dziembowski2008} and (formulated in  a different way) in \cite{Reingold2008}. The parameters were improved next in \cite{Fuller2011}. 
\begin{lemma}[Leakage Lemma \cite{Fuller2011}]\label{LeakageLemma_Computational}
Let $X$ and $Z$ be random variables over $\{0,1\}^{n}$ and $\{0,1\}^{m}$, resp. Then 
\begin{equation*}
 \widetilde{\mathbf{H}}^{\textup{Metric},(\mathrm{det}[0,1],s'),\epsilon'}\left(X|Z\right) \geqslant \mathbf{H}^{\textup{Metric},(\mathrm{det}[0,1],s),\epsilon}\left(X\right) - m
\end{equation*}
where $s'=s+\mathcal{O}(1), \ \epsilon' = 2^{m}\epsilon$ and $(\mathrm{det}[0,1],s)$ stands for the class of  circuits (as defined in Section \ref{Preliminaries}).
\end{lemma}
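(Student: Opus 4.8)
The plan is to argue directly with the Metric entropy through its geometric characterization, avoiding a detour through HILL. Unfolding Definition~\ref{MetricEntropy_Definition}, the target $\widetilde{\mathbf{H}}^{\textup{Metric},(\mathrm{det}[0,1],s'),\epsilon'}(X|Z)\geq k-m$ is a statement quantified over distinguishers: for each $D'\in(\mathrm{det}[0,1],s')$ I must produce a \emph{single} joint $(Y',Z)$ with $\widetilde{\mathbf{H}}_\infty(Y'|Z)\geq k-m$ and $|\mathbf{E}\,D'(X,Z)-\mathbf{E}\,D'(Y',Z)|\leq\epsilon'$. First I would fix one $D'$; since the attainable values $\mathbf{E}\,D'(Y',Z)$ form an interval as $Y'$ ranges over the feasible set, it suffices to bound $\mathbf{E}\,D'(X,Z)$ from above by $\mathsf{opt}:=\max_{Y'}\mathbf{E}\,D'(Y',Z)$ plus $\epsilon'$. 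The matching lower bound follows by repeating the argument for the complement $1-D'$, which lies in $(\mathrm{det}[0,1],s+\mathcal{O}(1))$; this, together with hard-wiring the second argument of $D'$, is what produces the $s'=s+\mathcal{O}(1)$ bookkeeping.

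For fixed $z$ write $D_z(\cdot)=D'(\cdot,z)$. Rewriting the hypothesis $\mathbf{H}^{\textup{Metric}}(X)\geq k$ via the sets $\Max^k_{D_z}$ gives, for every $z$, the single-distinguisher bound $\mathbf{E}_{x\leftarrow X}D_z(x)\leq 2^{-k}\sum_{x\in\Max^k_{D_z}}D_z(x)+\epsilon$, where each $D_z$ (being $D'$ with its leakage input fixed) has size at most $s$. I would assemble $Y'$ branch by branch, letting its conditional distribution $Y'|Z=z$ be supported on the top-$D_z$ points and permitting a per-branch max-probability as large as $2^{-k}/\mathbf{P}_Z(z)$. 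Averaging these allowances against $\mathbf{P}_Z$ yields $\sum_z\mathbf{P}_Z(z)\cdot 2^{-k}/\mathbf{P}_Z(z)=2^m\cdot 2^{-k}=2^{-(k-m)}$, i.e.\ exactly average min-entropy $k-m$: the $m$-bit leakage is what lets every branch concentrate a factor $2^m$ more than a level-$k$ object while still respecting the averaged constraint, and this is the origin of the $-m$ loss.

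It then remains to estimate the advantage. Writing $\mathbf{E}\,D'(X,Z)=\sum_z\mathbf{P}_Z(z)\,\mathbf{E}_{x\leftarrow X|Z=z}D_z(x)$, I would compare, branch by branch, the conditional profile of $X|Z=z$ against the flat comparison profile of $Y'|Z=z$ and charge the mismatch to the single-distinguisher guarantee for $D_z$. Since there are at most $2^m$ leakage values $z$, summing these per-branch discrepancies of size $\epsilon$ produces the claimed advantage $\epsilon'=2^m\epsilon$.

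The hard part is precisely this last reconciliation, and it is where essentially all the content lies. The unconditional Metric entropy only controls the \emph{marginal} expectation $\mathbf{E}_{x\leftarrow X}D_z$, whereas the advantage is driven by the \emph{conditional} expectations $\mathbf{E}_{x\leftarrow X|Z=z}D_z$, and a conditional law can concentrate on the high-$D_z$ region far more than the marginal does (in the degenerate case $Z=X$ all entropy disappears, matching the bound turning vacuous at $m=n$). Turning the informal allowance $2^{-k}/\mathbf{P}_Z(z)$ into a genuine family of conditional distributions -- i.e.\ flattening the conditionals $X|Z=z$ to the level-$(k-m)$ profile while paying only $\epsilon$ per branch -- is the real technical step; I expect this min-entropy budgeting in the joint space, rather than the choice of the $\Max$-sets or the size accounting, to be the main obstacle, and it is exactly the place where the dependence on $|Z|$ is incurred.
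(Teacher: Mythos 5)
Your scaffolding coincides with that of the proof in \cite{Fuller2011}, from which this lemma is quoted and whose key step the paper restates as Lemma~\ref{LeakageLemma} (and reuses in the proof of Theorem~\ref{ChainRule_IfModulus}): fix $D'$, reduce to a one-sided bound via closure under complement, hardwire $z$ to get $D_z$ of size $\approx s$, build $(Y',Z)$ branch by branch with per-branch min-entropy allowance $k-\log(1/\mathbf{P}_Z(z))$, and average to get both $\widetilde{\mathbf{H}}_{\infty}(Y'|Z)\geqslant k-m$ and advantage $2^{m}\epsilon$. But you stop exactly at the step that carries all the content. The per-branch advantage bound --- that $\mathbf{E}_{x\leftarrow X|Z=z}D_z(x)$ exceeds the optimum over distributions of min-entropy $k-\log(1/\mathbf{P}_Z(z))$ by at most $\epsilon/\mathbf{P}_Z(z)$ --- is only asserted (``charge the mismatch to the single-distinguisher guarantee for $D_z$'') and then flagged by you as ``the main obstacle''; it is never proved, and it is precisely the paper's Lemma~\ref{LeakageLemma}. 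The way to fill it: since $D_z\geqslant 0$, one has $\mathbf{P}_Z(z)\,\mathbf{E}_{x\leftarrow X|Z=z}D_z(x)\leqslant\mathbf{E}_{x\leftarrow X}D_z(x)$; and for a \emph{boolean} $D_z$ the constrained optimum has the closed form $\max_{Y:\,\Hm{Y}\geqslant\kappa}\mathbf{E}D_z(Y)=\min\left(1,2^{-\kappa}|D_z|\right)$, which is exactly linear in $2^{-\kappa}$ below the cap at $1$ (and above the cap the needed bound is trivial, as the conditional expectation is at most $1$). Hence dividing the marginal guarantee $\mathbf{E}_{x\leftarrow X}D_z(x)\leqslant\min\left(1,2^{-k}|D_z|\right)+\epsilon$ by $\mathbf{P}_Z(z)$ lands exactly on the level-$\left(k-\log(1/\mathbf{P}_Z(z))\right)$ optimum plus $\epsilon/\mathbf{P}_Z(z)$.

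Moreover, the route you sketch would not merely be incomplete but would actually break, because you run it with $[0,1]$-valued $D_z$ and the sets $\Max^{k}_{D_z}$. For real-valued $D_z$ the optimum at level $\kappa$ is $2^{-\kappa}\sum_{x\in\Max^{\kappa}_{D_z}}D_z(x)$, which does \emph{not} scale linearly in $2^{-\kappa}$: writing $k'=k-\log(1/\mathbf{P}_Z(z))$, your charging argument yields the upper bound $2^{-k'}\sum_{x\in\Max^{k}_{D_z}}D_z(x)+\epsilon/\mathbf{P}_Z(z)$, whereas what you need is $2^{-k'}\sum_{x\in\Max^{k'}_{D_z}}D_z(x)+\epsilon/\mathbf{P}_Z(z)$; since $\Max^{k'}_{D_z}\subseteq\Max^{k}_{D_z}$, the former can strictly exceed the latter. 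Concretely, for $D_z\equiv 1/2$ and $\mathbf{P}_Z(z)=1/2$ the derived bound is $1+\epsilon/\mathbf{P}_Z(z)$ while the needed one is $1/2+\epsilon/\mathbf{P}_Z(z)$, and nothing in your argument closes this gap. The fix is to first replace $D'$ by a boolean threshold $\mathbf{1}\left[D'>t\right]$ --- the tail-integration argument of Theorem~\ref{nonaverage_passing_to_realvalued} --- for which the $\min(1,\cdot)$ scaling above is exact, and only then run the per-$z$ conversion. In short: correct skeleton and correct bookkeeping, but the marginal-to-conditional conversion, which is the heart of the lemma and the one place where booleanity is genuinely used, is missing.
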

\noindent Let us observe, at least under assumption that there exists an exponentially secure pseudorandom generator, that both losses: in quantity (by $m$ bits) and security measured as $s/\epsilon$ (by factor almost equal to $2^{m}$) can appear simultaneously\footnote{The question whether it can happen was raised in \cite{Fuller2011}} -  
 see Theorem \ref{LeakageLemma_Computational_OptimalLoss} in Appendix \ref{app:LeakageLemma_Thightness}.  

\paragraph{Leakage Chain Rule for Computational Entropy - negative and positive results}
It is a natural question to ask if the Leakage Chain Rule (Lemma \ref{ChainRule_MinEntropy}) can be ``translated'' into the computational version.  In particular, one might be tempted to conjecture that for $X,Z_1$ and $Z_2$ as in Lemma \ref{ChainRule_MinEntropy} it holds that 
\begin{equation}\label{ChainRule_GeneralForm}
\widetilde{\mathbf{H}}^{\textup{Metric},[0,1],s',\epsilon'}\left(X|Z_1,Z_2\right) \geqslant^{\textup{?}} \widetilde{\mathbf{H}}^{\textup{Metric},[0,1],s,\epsilon}\left(X|Z_1\right) - m_2,
\end{equation}
with the security loss of factor $2^{m_2}$ or $\mathrm{poly}\left( 2^{m_2},1/\epsilon\right)$ for the above stated in terms of HILL entropy, where by the security loss we mean $\frac{s'}{\epsilon'} / \frac{s}{\epsilon}$ (reduces to $\epsilon/ \epsilon'$ if $s' \approx s$). Unfortunately, this conjecture is false in general \cite{Pietrzak2013}.  On the positive side, some progress towards proving it for restricted definitions of entropy has been recently made  \cite{Fuller2011, Chung2011, Reyzin2011, GentryWichs2010}. In \cite{Fuller2011}, the authors use an assumption called {\em decomposability}:

\begin{definition}[\cite{Fuller2011}]\label{DecomposableEntropy_Fuller}
Let $X,Z$ be as in Lemma \ref{LeakageLemma_Computational}. Given the parameter $s$, we say that \emph{the decomposable metric-entropy of $X$ conditioned on $Z$ is at least $k$}  and denote $\HmtrdAv{X|Z}{[0,1],s}{\epsilon}\geqslant k$, if for every $z$ 
\[
\HmtrAv{X|Z=z}{[0,1],s}{\epsilon(z)}\geqslant k(z)
\]
where $\epsilon(z), k(z)$ are some numbers satisfying 
$\mathbf{E}\left[ 2^{-k(Z)} \right] = 2^{-k}$ and $\mathbf{E} \left[ \epsilon(Z) \right] \leqslant \epsilon$.
\end{definition}
\noindent
Using this definition they are able to prove the following.
\begin{theorem}[\cite{Fuller2011}] Let $X,Z_1,Z_2$ be as in Lemma  \ref{ChainRule_MinEntropy}\label{ChainRule_ifDecomposable}. Then
for $s'\approx s,$ and $\epsilon' = 2^{m_2}\epsilon$, we have 
\begin{displaymath}
 \HmtrdAv{X\left|Z_1,Z_2\right.}{[0,1],s'}{\epsilon'} \geqslant
\HmtrdAv{X\left|Z_1\right.}{[0,1],s}{\epsilon} - |Z_2|
\end{displaymath}
\end{theorem}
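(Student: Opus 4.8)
The plan is to exploit the defining feature of decomposable entropy: it collapses a \emph{conditional} quantity into a family of \emph{unconditional} metric-entropy bounds, one for each value of the conditioning variable. So I would first unfold both sides. The hypothesis $\HmtrdAv{X\left|Z_1\right.}{[0,1],s}{\epsilon}\geq k$ means that for every $z_1$ the fixed distribution $X|Z_1=z_1$ satisfies $\Hmtr{X|Z_1=z_1}{[0,1],s}{\epsilon(z_1)}\geq k(z_1)$, with $\mathbf{E}\left[2^{-k(Z_1)}\right]=2^{-k}$ and $\mathbf{E}\left[\epsilon(Z_1)\right]\leq\epsilon$. Dually, the goal $\HmtrdAv{X\left|Z_1,Z_2\right.}{[0,1],s'}{\epsilon'}\geq k-m_2$ is equivalent to producing, for every pair $(z_1,z_2)$, an unconditional bound $\Hmtr{X|Z_1=z_1,Z_2=z_2}{[0,1],s'}{\epsilon'(z_1,z_2)}\geq k'(z_1,z_2)$ together with the averaging constraints $\mathbf{E}\left[2^{-k'(Z_1,Z_2)}\right]=2^{-(k-m_2)}$ and $\mathbf{E}\left[\epsilon'(Z_1,Z_2)\right]\leq 2^{m_2}\epsilon$. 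Thus the whole statement reduces to passing from per-$z_1$ bounds to per-$(z_1,z_2)$ bounds, i.e.\ to performing one further conditioning step \emph{inside} each slice.

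The key tool I would use is a \emph{decomposable} strengthening of the Leakage Lemma~\ref{LeakageLemma_Computational}: if a fixed distribution $W\in\{0,1\}^{n}$ satisfies $\Hmtr{W}{[0,1],s}{\epsilon_0}\geq k_0$ and $V\in\{0,1\}^{m}$ is any jointly distributed variable, then conditioning yields a \emph{per-value} guarantee, namely for each $v$ we have $\Hmtr{W|V=v}{[0,1],s+\bigO{1}}{\epsilon_0(v)}\geq k_0(v)$ with $\mathbf{E}\left[2^{-k_0(V)}\right]=2^{-(k_0-m)}$ and $\mathbf{E}\left[\epsilon_0(V)\right]\leq 2^{m}\epsilon_0$. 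I would apply this separately in each slice, with $W=X|Z_1=z_1$, $V=Z_2|Z_1=z_1$ and $m=m_2$. Since $X|Z_1=z_1,Z_2=z_2$ is exactly the slice $W|V=z_2$, this delivers the required per-$(z_1,z_2)$ unconditional bounds, with $k'(z_1,z_2)$ and $\epsilon'(z_1,z_2)$ the parameters returned by the decomposable leakage lemma.

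The parameter bookkeeping then goes through by the tower property of expectation. Within a single slice the decomposable leakage lemma gives $\mathbf{E}_{z_2\leftarrow Z_2|Z_1=z_1}\left[2^{-k'(z_1,z_2)}\right]=2^{m_2}\cdot 2^{-k(z_1)}$ and $\mathbf{E}_{z_2}\left[\epsilon'(z_1,z_2)\right]\leq 2^{m_2}\epsilon(z_1)$; averaging these over $z_1$ and invoking the hypothesis's constraints yields $\mathbf{E}\left[2^{-k'(Z_1,Z_2)}\right]=2^{m_2}\mathbf{E}\left[2^{-k(Z_1)}\right]=2^{-(k-m_2)}$ and $\mathbf{E}\left[\epsilon'(Z_1,Z_2)\right]\leq 2^{m_2}\epsilon=\epsilon'$, which are precisely the decomposable constraints at level $k-m_2$. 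The circuit size grows only by the additive $\bigO{1}$ incurred per slice, giving $s'\approx s$, so all three conclusions of the theorem follow.

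The main obstacle is the decomposable leakage lemma itself, which is genuinely stronger than the average-case Lemma~\ref{LeakageLemma_Computational} and \emph{cannot} be derived from it: average conditional metric entropy only supplies a single witness whose min-entropy $\HmAv{Y|V}$ is controlled on average, but which may collapse on individual values $v$, so it does not unfold into per-value bounds. The per-value guarantee must therefore be proved directly, and the difficulty is that the leakage $V$ (here $Z_2$) need not be a deterministic function of $W$, so one cannot simply hardcode the event $V=v$ into a distinguisher. I would prove it via the Max-set characterization of metric entropy against $[0,1]$-valued circuits, $\Hmtr{P}{[0,1],s}{\epsilon}\geq j \iff \mathbf{E}_{x\leftarrow P}D(x)\leq 2^{-j}\sum_{x\in\Max^{j}_{D}}D(x)+\epsilon$ for all $D$ of size $s$ (using closure of $\cD$ under complement), applied slice by slice through the reweighting $\mathbf{P}_{W|V=v}(x)=\mathbf{P}_{W,V}(x,v)/\mathbf{P}_V(v)$. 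The crucial point is that the error budget of the joint distribution is amplified on the conditional slices by a factor that aggregates, weighted by $\mathbf{P}_V(v)$, to exactly $\mathbf{E}_{v\leftarrow V}\left[1/\mathbf{P}_V(v)\right]=|\mathrm{supp}(V)|\leq 2^{m}$; verifying that the per-slice entropy deficits and errors sum to precisely $2^{m}$, and no more, is the heart of the argument.
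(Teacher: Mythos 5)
Your overall architecture --- unfold decomposability into per-slice \emph{unconditional} metric-entropy bounds, apply a per-value leakage lemma inside each slice, and aggregate by the tower property --- is the right one; it is how \cite{Fuller2011} proceed (this paper itself only cites the theorem; its closest relative here is the proof of Theorem~\ref{ChainRule_IfModulus}), and the bookkeeping in your third paragraph is fine. The genuine gap is in the proof you sketch for your key lemma. For $[0,1]$-valued distinguishers, the per-distinguisher, per-value statement you plan to verify --- that a deficit of $\epsilon_0$ against $D$ at level $k_0$ for the joint distribution forces a deficit of at most $\epsilon_0/\mathbf{P}_V(v)$ against the \emph{same} $D$ at level $k_0(v)=k_0-\log(1/\mathbf{P}_V(v))$ on each slice --- is \emph{false}. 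Your reweighting gives $\mathbf{E}_{x\leftarrow W|V=v}D(x)\leqslant \frac{1}{\mathbf{P}_V(v)}\left(2^{-k_0}\sum_{x\in\Max^{k_0}_{D}}D(x)+\epsilon_0\right)$, whereas the slice bound you need involves $\sum_{x\in\Max^{k_0(v)}_{D}}D(x)$; since $\Max^{k_0(v)}_{D}\subset\Max^{k_0}_{D}$, the inequality between these two sums runs the wrong way, and the surplus $2^{-k_0}\sum_{x\in\Max^{k_0}_{D}\setminus\Max^{k_0(v)}_{D}}D(x)$ does not vanish unless $D$ is boolean. Concretely, for large $n$ let $\mathbf{P}_V(v)=1/2$, let $D=1$ on a set $A$ of size $2^{k-2}$, $D=1/4$ on a disjoint set of size $2^{k}$, and $D=0$ elsewhere, and let $W|V=v$ be uniform on $A$ while $W|V=\bar{v}$ is uniform on the zero set of $D$. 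Then $\mathbf{E}D(W)=1/2$ and $\max\left\{\mathbf{E}D(Y):\Hm{Y}\geqslant k\right\}=7/16$, so the deficit of $W$ at level $k$ with respect to $D$ is $1/16$; but $\mathbf{E}D(W|V=v)=1$ and $\max\left\{\mathbf{E}D(Y'):\Hm{Y'}\geqslant k-1\right\}=5/8$, a slice deficit of $3/8$, far exceeding $2\cdot(1/16)=1/8$. So ``verifying that the per-slice deficits and errors sum to precisely $2^{m}$'' --- what you call the heart of the argument --- cannot be carried out for real-valued circuits.

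This is exactly why the paper's Lemma~\ref{LeakageLemma} (the per-value lemma extracted from the proof of Lemma 3.5 in \cite{Fuller2011}) is stated for \emph{boolean} $D$ only: for boolean $D$ both top-set sums collapse to $\min(2^{j},|D|)$ at the respective levels $j$, and saturation at $1$ handles the remaining case, so the $1/\mathbf{P}_V(v)$ scaling does line up. The repair is cheap and uses a tool already in the paper: decomposable entropy refers only to unconditional metric entropy of the slice distributions, and for unconditional (more generally, worst-case conditional) metric entropy, boolean and $[0,1]$-valued circuits are equivalent up to $s'\approx s$ (Theorem~\ref{nonaverage_passing_to_realvalued}). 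So within each slice pass to boolean distinguishers (the trivial direction, since they form a subclass), apply Lemma~\ref{LeakageLemma} value by value exactly as you intend, and convert back to $[0,1]$-valued circuits at the end; your aggregation then goes through verbatim and yields the theorem. Without this boolean detour, the direct real-valued argument you describe has a false step at its core.
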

\noindent In the other approach \cite{Chung2011}, the authors use some \emph{samplability} assumptions. 
\begin{theorem}[\cite{Chung2011}]\label{ChainRule_ifSamplable}
Let $X,Z_1,Z_2$ be as above. Suppose that there exists a random variable $Y'$ with the following properties: (a) $\Hm{Y'|Z_1}\geqslant k$, the pair $(Y',Z_1),(X,Z_1)$ is $(s,\epsilon)$ indistinguishable and (b) there exists a randomized circuit $\Gamma$ of complexity $s_{\Gamma}$, which receives on its input $z\in\mathrm{supp}(Z_1)$ and return samples of $Y'|Z_1=z_1$. Then for $s'= \Omega\left(s\cdot 2^{-m_2}\delta-s_{\Gamma}\right),\, \epsilon' \approx \epsilon+\delta$ we have
\begin{equation*}
\mathbf{H}^{\textup{Metric},[0,1],s',\epsilon'}\left(X|Z_1,Z_2\right) \geqslant \mathbf{H}^{\textup{Metric},[0,1],s,\epsilon}(\left.X\right|Z_1)-\left|Z_2\right| - \log(1/\delta).
\end{equation*}
\end{theorem}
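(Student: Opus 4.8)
\medskip\noindent\textbf{Proof plan.} The plan is to prove the statement through the metric-entropy reduction: fixing an arbitrary $[0,1]$-valued circuit $D$ of size $s'$, I would exhibit a single random variable $Y''$ with $\Hm{Y''|Z_1,Z_2}\geqslant k'$, where $k' := k-|Z_2|-\log(1/\delta)$, such that $D$ fails to separate $(X,Z_1,Z_2)$ from $(Y'',Z_1,Z_2)$ with advantage more than $\epsilon'$. By Definition~\ref{MetricEntropy_Definition} this is exactly what $\Hmtr{X|Z_1,Z_2}{[0,1],s'}{\epsilon'}\geqslant k'$ asks for. The high-entropy witness $Y''$ will be manufactured out of the samplable source $Y'$ supplied by assumptions~(a)--(b).

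\emph{Constructing the witness and bounding its entropy.} First I would transport the leakage onto $Y'$: form the joint distribution $(Y',Z_1,W)$ in which $W$ is drawn, given $(Y',Z_1)=(y,z_1)$, from the same conditional channel $\mathbf{P}_{Z_2\mid X=y,Z_1=z_1}$ that generated $Z_2$ from $(X,Z_1)$, and then take $Y''$ to be $Y'$ re-conditioned on the observed value of $(Z_1,W)$. Since appending $W$ is an information-theoretic operation on $(Y',Z_1)$, the chain rule (Lemma~\ref{ChainRule_MinEntropy}) yields $\HmAv{Y'\mid Z_1,W}\geqslant k-m_2$ (using that the worst-case bound $\Hm{Y'\mid Z_1}\geqslant k$ implies the average one). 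Passing from average to worst case by Lemma~\ref{Average_WorstCase_Conversion}, for all but a $\delta$-fraction of the pairs $(z_1,w)$ one has $\Hm{Y'\mid Z_1=z_1,W=w}\geqslant k-|Z_2|-\log(1/\delta)=k'$. Re-defining $Y''$ to be uniform on the offending $\delta$-fraction raises its worst-case conditional min-entropy to $k'$ everywhere while perturbing any bounded test by at most $\delta$; this is the origin of both the $-\log(1/\delta)$ entropy loss and one of the $\delta$ terms inside $\epsilon'$.

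\emph{Bounding the advantage by reduction to~(a).} It remains to show that $D$ separates $(X,Z_1,Z_2)$ from $(Y'',Z_1,Z_2)$ by at most $\epsilon+\delta$. The plan is to convert $D$ into a distinguisher $D'$ for the \emph{single} conditioning $(X,Z_1)$ versus $(Y',Z_1)$, whose advantage is at most $\epsilon$ by~(a), and to argue that the two advantages differ by at most $\delta$. Here the samplability assumption is indispensable: because the leakage is allowed to be \emph{arbitrarily} correlated with $X$ and need not be an efficiently computable function, $D'$ cannot simply recompute $Z_2$ from its input. Instead, on input $(v,z_1)$ it would use $\Gamma$ to re-sample $Y'\mid Z_1=z_1$ and, ranging over all $2^{m_2}$ admissible leakage values $z_2\in\{0,1\}^{m_2}$, empirically estimate the reweighting that turns $Y'$ into the conditional $Y''\mid Z_1=z_1,Z_2=z_2$, outputting the corresponding reweighted evaluation of $D$. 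Enumerating the $2^{m_2}$ values is what multiplies the size budget by $2^{m_2}$, running $\Gamma$ contributes the additive $s_{\Gamma}$ term, and demanding that the empirical estimates be accurate to within $\delta$ fixes the sampling precision; together these give exactly $s'=\Omega(s\cdot 2^{-m_2}\delta-s_{\Gamma})$ and $\epsilon'\approx\epsilon+\delta$.

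\emph{Main obstacle.} The delicate point is this last step: reconciling the \emph{real} leakage marginal $(Z_1,Z_2)$, on which the metric-entropy comparison must be performed, with the \emph{transported} marginal $(Z_1,W)$ used to define $Y''$, for a leakage that may be information-theoretically but not efficiently describable. Carrying out this change of measure through $\Gamma$ while keeping the advantage loss \emph{additive} (only $+\delta$) rather than multiplicative (the $2^{m_2}\epsilon$ blow-up incurred by the cruder Lemma~\ref{LeakageLemma_Computational}) is the crux, and it is precisely what forces the price to be paid in circuit size (the $2^{m_2}$ factor) rather than in distinguishing advantage.
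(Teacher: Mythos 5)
Your attempt cannot be compared against ``the paper's own proof'' of this statement, because the paper gives none: Theorem~\ref{ChainRule_ifSamplable} is imported verbatim from \cite{Chung2011} as prior work. The paper's own route to a samplability-based chain rule is deliberately indirect --- Theorem~\ref{Samplable_implies_Modulus} converts assumptions (a)--(b) into modulus entropy, and Theorem~\ref{ChainRule_IfModulus} then applies the chain rule --- so the only fair comparison is between your plan and the key idea of \cite{Chung2011} (their Lemma~16), which the paper isolates inside the proof of Theorem~\ref{Samplable_implies_Modulus}.

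Judged on its own merits, your proposal has a genuine gap, and it is exactly the step you yourself label ``the crux.'' Your witness is built by pushing the leakage channel $\mathbf{P}_{Z_2|X,Z_1}$ through $Y'$, producing $(Y',Z_1,W)$, and then gluing the conditionals $Y'|Z_1=z_1,W=z_2$ onto the \emph{real} marginal $(Z_1,Z_2)$. Three things then fail or are left unproved. (i) The change of measure between the transported marginal $(Z_1,W)$ and the real marginal $(Z_1,Z_2)$ is never carried out; you defer it as the crux, but it is not a routine detail: the channel $\mathbf{P}_{Z_2|X,Z_1}$ is an arbitrary correlation, possibly computable by no small circuit, and applying an \emph{inefficient} channel to the computationally indistinguishable pairs $(X,Z_1)$ and $(Y',Z_1)$ need not keep $(Z_1,Z_2)$ and $(Z_1,W)$ even computationally close, so no ``$+\delta$'' bound can be extracted from assumption~(a) here. (ii) Your reduction asks $D'$ to ``empirically estimate the reweighting'' that turns $Y'|Z_1=z_1$ into $Y''|Z_1=z_1,Z_2=z_2$; that reweighting factor is precisely the value of the leakage channel at the sampled point, to which $\Gamma$ gives no access at all ($\Gamma$ only outputs samples of $Y'|Z_1=z_1$), and whose exponentially large table of values cannot be hardwired into a circuit of the allowed size. (iii) The patching step (redefining $Y''$ to be uniform on the offending $\delta$-fraction) controls that fraction under the $(Z_1,W)$ marginal via Lemma~\ref{Average_WorstCase_Conversion}, but the resulting perturbation of the test must be bounded under the real $(Z_1,Z_2)$ marginal, where the offending set may carry much more weight. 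The argument of \cite{Chung2011} --- as distilled in the paper's Theorem~\ref{Samplable_implies_Modulus} --- avoids all of this by never transporting the leakage: the high-entropy witness is chosen \emph{per distinguisher} as the flat distribution on $\Max^{k}_{D(\cdot,z)}$, and samplability is used only to implement the comparison ``is $D(x,z)$ larger than $D$ evaluated at $\ell$ fresh samples drawn by $\Gamma(z)$?'' (cf.\ Lemma~\ref{extreme_distribution_inequality}), a test that requires no knowledge whatsoever of the channel that produced the leakage. That idea --- estimating the quantile of $D$ under the samplable distribution rather than reweighting toward an unsamplable one --- is the missing ingredient your plan would need.
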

\noindent Finally, there is yet another result related to the chain rule problem, due to \cite{GentryWichs2010}. The authors prove a version of \ref{LeakageLemma_Computational} for using a \emph{nonstandard definition} of Metric Conditional Min-Entropy, which they call the \emph{relaxed computational entropy}. The difference is in Layer (\ref{AbstractEntropy:a}) of the definition: they require $(X,Z)$, to be indistinguishable from all distribution $\left(Y,Z'\right)$ satisfying $\Hm{Y|Z'}\geqslant k$, where-- in comparison to Definition \ref{HILLEntropy_Definition}-- $Z'$ is \emph{not necessarily} equal to $Z$. As observed in \cite{Reyzin2011}, one can easily generalize their approach to prove an ``efficient'' computational version of \ref{ChainRule_GeneralForm} for this definition, with a loss of a factor at most $\mathrm{poly}\left(2^{m_2},\epsilon^{-1}\right)$ in security. It seems however, that in the context of leakage Definition \ref{MetricEntropy_Definition} and \ref{HILLEntropy_Definition} are more suitable, because $Z$ can be what an adversary might have learned about $X$ \cite{Chung2011}; see also the conclusions in \cite{Pietrzak2013}. Being interested in applications in leakage cryptography, we follow the standard definition of the computational entropy in this paper.

\section{Modulus Entropy}

\noindent Our definition is a bit different than Definition \ref{DecomposableEntropy_Fuller} .  
\begin{definition}[Modulus Metric Entropy]\label{ModulusEntropy}
Let $X\in\{0,1\}^{n}$ and $Z\in\{0,1\}^{m}$ be random variables. Given $\epsilon > 0$ and a class of \emph{deterministic boolean} functions $\cD$, we say that \emph{$X$ conditioned on $Z$ has modulus entropy at least $k$ against $\left(\cD,\epsilon\right)$}, and denote it by $\HabsAv{X|Z}{\cD}{\epsilon}\geqslant k$, if for any $D\in\cD$ there exists a random variable $Y\in\{0,1\}^{n}$, satisfying $\widetilde{\mathbf{H}}_{\infty}(Y|Z)\geqslant k$, such that 
\begin{equation}\label{modulus_inequality}
 \mathbf{E}_{z\leftarrow Z}\left| \mathbf{E}_{x\leftarrow (X|Z=z)} D(x,z) - \mathbf{E}_{x\leftarrow (Y|Z=z)} D(x,z)\right| \leqslant \epsilon
\end{equation}
\end{definition}
\noindent The definition above, formulated for the average-case conditional entropy, can be stated also for  the worst-case version, by replacing $\widetilde{\mathbf{H}}_{\infty}$ with $\mathbf{H}_{\infty}$. Using Lemma \ref{Average_WorstCase_Conversion} we obtain immediately a conversion (with some loss) between them:
\begin{lemma}\label{Computational_Average_WorstCase_Conversion}
Suppose that $\HabsAv{X|Z}{\cD}{\epsilon}\geqslant k$. Then $\Habs{X|Z}{\cD}{\epsilon+\delta}\geqslant k-\log(1/\delta)$.
\end{lemma}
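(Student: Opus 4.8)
The plan is to fix an arbitrary distinguisher, invoke the average-case hypothesis to obtain a witness $Y$, and then surgically modify $Y$ on a small set of ``bad'' conditioning values so that the resulting variable $Y'$ has \emph{worst-case} conditional min-entropy, at the price of an additive penalty $\delta$ in the modulus advantage. Since the modulus-entropy definition quantifies over distinguishers as ``for every $D$ there exists $Y$'', it suffices to carry out this construction for one fixed $D$ at a time.

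First I would fix $D\in\cD$. By the assumption $\HabsAv{X|Z}{\cD}{\epsilon}\geqslant k$ there exists $Y\in\{0,1\}^{n}$ with $\HmAv{Y|Z}\geqslant k$ satisfying
\[
\mathbf{E}_{z\leftarrow Z}\left| \mathbf{E}_{x\leftarrow (X|Z=z)} D(x,z) - \mathbf{E}_{x\leftarrow (Y|Z=z)} D(x,z)\right| \leqslant \epsilon .
\]
Applying Lemma~\ref{Average_WorstCase_Conversion} to the pair $(Y,Z)$ produces a set $G$ of conditioning values with $\Pr[Z\in G]\geqslant 1-\delta$ such that $\Hm{Y|Z=z}\geqslant k-\log(1/\delta)$ for every $z\in G$; write $\mathcal{B}$ for its complement, so that $\Pr[Z\in\mathcal{B}]\leqslant\delta$.

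Next I would define a modified witness $Y'$ by keeping $Y'|Z=z=Y|Z=z$ for $z\in G$ and replacing $Y'|Z=z$ by the uniform distribution $U_{n}$ for $z\in\mathcal{B}$. Because $\Hm{U_{n}}=n\geqslant k-\log(1/\delta)$ and the good-part conditionals already have min-entropy at least $k-\log(1/\delta)$, every $z$ now satisfies $\Hm{Y'|Z=z}\geqslant k-\log(1/\delta)$, i.e.\ $\Hm{Y'|Z}\geqslant k-\log(1/\delta)$ in the worst-case sense demanded by the conclusion. To bound the modulus advantage of $Y'$ I would split the outer expectation according to whether $z\in G$ or $z\in\mathcal{B}$. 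On $G$ the integrand for $Y'$ coincides with that for $Y$, and since every term of the modulus sum is nonnegative, the partial sum over $G$ is dominated by the full sum over all $z$, hence at most $\epsilon$. On $\mathcal{B}$ each integrand lies in $[0,1]$, as the inner quantities are expectations of the boolean $D$, so that contribution is at most $\Pr[Z\in\mathcal{B}]\leqslant\delta$. Adding the two pieces gives advantage at most $\epsilon+\delta$, and since $D$ was arbitrary this yields $\Habs{X|Z}{\cD}{\epsilon+\delta}\geqslant k-\log(1/\delta)$.

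I expect the only genuine subtlety to be checking that overwriting the bad conditionals by $U_{n}$ does not violate the target min-entropy bound: one needs $k-\log(1/\delta)\leqslant n$, which holds because $\HmAv{Y|Z}\geqslant k$ forces $k\leqslant n$. The other point worth stating explicitly is that discarding the bad-set terms (rather than all terms) when comparing against the $Y$-advantage is legitimate precisely because every summand is nonnegative. Beyond these two observations the argument is entirely elementary, with no heavy computation involved.
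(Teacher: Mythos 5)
Your proof is correct and follows exactly the route the paper intends: the paper gives no explicit argument, asserting the conversion follows immediately from Lemma~\ref{Average_WorstCase_Conversion}, and your write-up is precisely that application — fix $D$, take the average-case witness $Y$, patch its conditionals on the bad set (probability at most $\delta$) with $U_n$, and charge the patch to the advantage using nonnegativity of the modulus terms and boundedness of boolean distinguishers. The two subtleties you flag ($k\leqslant n$ and the legitimacy of dropping the bad-set terms) are handled correctly.
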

\paragraph{Intuitions and motivations behind modulus entropy.}
The only difference between Def. \ref{MetricEntropy_Definition} and Def. \ref{ModulusEntropy} is that they differ in order of the expectation and absolute value signs. Thus, by the triangle inequality, the Modulus Entropy is smaller than Metric Entropy. However, they are not necessarily equal in general. Indeed, for $D$ distinguishing between $(X,Z)$ and $(Y,Z)$ with the advantage no greater than $\epsilon$, contributions to this advantage from particular values of $z$, given by the expressions $\epsilon_D(z)=\mathbf{E}_{x\leftarrow X|Z=z}D(x,z) - \mathbf{E}_{x\leftarrow Y|Z=z}D(x,z)$ can differ in signs. For a few values $z$ we can ''flip" the output of $D(\cdot,z)$ as to ensure that all their contributions are of the same sign; this is however not possible if $Z$ is to long, unless we lose much in efficiency. Thus $\left|\mathbf{E}_{z\leftarrow Z}\epsilon_D(z)\right| \leqslant \epsilon$ does not imply $\mathbf{E}_{z\leftarrow Z}\left|\epsilon_D(z)\right|\leqslant \epsilon$, which is required by inequality (\ref{modulus_inequality}). In comparison to Definition \ref{DecomposableEntropy_Fuller}, our approach is far more general as allow $\epsilon(z)$ and $k(z)$ to be dependent on a chosen $D$. 

We stress that condition \ref{modulus_inequality} is not unnatural. Its purpose is to give much more control over the particular contributions to the advantage, corresponding to the outcomes of $Z$. For instance, Dodis et al. in \cite{Dodis2013} control the average square of the contributions to the advantage (by the inequality $\mathbf{E}_{z\leftarrow Z}\left|\epsilon_D(z)\right|^2\leqslant \epsilon$), defining ''squared indistinguishability".

\subsection{Leakage Chain Rule for Modulus Entropy}

We now show how modulus entropy allows us to prove a leakage chain rule. 
We start with the reformulation of the leakage lemma proved in \cite{Fuller2011}.
\begin{lemma}[Corollary from \cite{Fuller2011}, Proof of Lemma 3.5]\label{LeakageLemma}
Let $D$ be a boolean function, $(X,Z)$ as in Lemma\ \ref{LeakageLemma_Computational}. Suppose $\left| \Ev{D}{X}{x} - \Ev{D}{Y}{x}\right| \leqslant \epsilon$, where $\mathbf{H}_{\infty}(Y)\geqslant k$. Then for any $z\in\mathrm{supp}(Z)$ there exist a distribution $Y'_z$ such that $\Hm{Y'_z} \geqslant k-\log(1/\mathbf{P}_Z(z))$ and $\left| \Ev{D}{X|Z=z}{x} - \Ev{D}{Y'_z}{x}\right| \leqslant \epsilon / \mathbf{P}(Z=z)$.
\end{lemma}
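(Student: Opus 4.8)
The plan is to exploit that $D$ is boolean, so that for any distribution $W$ the quantity $\Ev{D}{W}{x}$ equals $\mathbf{P}_W(S)$, where $S=\{x:D(x)=1\}$. Writing $p=\mathbf{P}_Z(z)$ (which is positive since $z\in\mathrm{supp}(Z)$), $a=\Ev{D}{X|Z=z}{x}=\mathbf{P}_{X|Z=z}(S)$ and $b=\Ev{D}{Y}{x}=\mathbf{P}_Y(S)$, the hypothesis becomes $|\mathbf{P}_X(S)-b|\leqslant\epsilon$, and the goal is to build a distribution $Y'_z$ with $\Hm{Y'_z}\geqslant k-\log(1/p)$ (equivalently $\max_x\mathbf{P}_{Y'_z}(x)\leqslant 2^{-k}/p$) and $|a-\mathbf{P}_{Y'_z}(S)|\leqslant\epsilon/p$.

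First I would fix the \emph{shape} of the target by taking $Y'_z=\tfrac{1}{p}\nu$, where $\nu$ is a sub-measure of $Y$ (that is, $0\leqslant\nu(x)\leqslant\mathbf{P}_Y(x)$ for all $x$) of total mass $\sum_x\nu(x)=p$. This choice takes care of the min-entropy requirement automatically: since $\mathbf{P}_Y(x)\leqslant 2^{-k}$, we get $\mathbf{P}_{Y'_z}(x)=\nu(x)/p\leqslant 2^{-k}/p$, so $\Hm{Y'_z}\geqslant k-\log(1/p)$ holds for \emph{free}, independently of how the mass of $\nu$ is spread between $S$ and its complement. The only remaining freedom is that split, and a sub-measure of $Y$ of total mass $p$ can realise any value $\nu(S)=t$ in the interval $[\max(0,\,p-(1-b)),\ \min(p,\,b)]$ (for instance by scaling $Y$ restricted to $S$ and to $S^c$). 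I would then set $t$ to be the projection of the ideal value $pa$ onto this interval, so that $\mathbf{P}_{Y'_z}(S)=t/p$ is as close to $a$ as the geometry permits.

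The crux is to show that this projection costs at most $\epsilon$, i.e.\ $|t-pa|\leqslant\epsilon$; then $|a-\mathbf{P}_{Y'_z}(S)|=|pa-t|/p\leqslant\epsilon/p$, as required. If $pa$ already lies in the interval there is nothing to prove. The main obstacle --- and the reason one cannot simply partition $Y$ consistently across \emph{all} $z$ at once (that would force $\mathbf{P}_X(S)=\mathbf{P}_Y(S)$ exactly) --- is the clamping case $pa>b$, where the whole $S$-budget $b$ of $Y$ may be too small to hold the ideal mass $pa$. Here the key observation is that a \emph{single} term is dominated by the entire mixture: $pa=\mathbf{P}_Z(z)\,\mathbf{P}_{X|Z=z}(S)\leqslant\sum_{z'}\mathbf{P}_Z(z')\,\mathbf{P}_{X|Z=z'}(S)=\mathbf{P}_X(S)\leqslant b+\epsilon$, so the overshoot $pa-b$ is at most $\epsilon$. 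The symmetric case (too little room in $S^c$) is handled by the same argument applied to $S^c$, $1-b$ and $1-a$, using $p(1-a)\leqslant\mathbf{P}_X(S^c)\leqslant(1-b)+\epsilon$; and since the total mass of $\nu$ is pinned to $p$, the errors on $S$ and $S^c$ coincide. Combining the three cases gives $|t-pa|\leqslant\epsilon$ in general, which completes the construction.
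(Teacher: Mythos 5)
Your proof is correct. Note that the paper itself never proves this lemma: it is imported as a black box from the proof of Lemma 3.5 in \cite{Fuller2011}, so what you have written is a self-contained argument for a statement the paper only cites. Your route is essentially the same scaling idea that underlies the cited proof: because $D$ is boolean, everything collapses to the one-dimensional quantity $\mathbf{P}_W(S)$ for $S=D^{-1}(1)$, and $Y'_z$ is obtained by carving a mass-$p$ piece $\nu$ out of $Y$ and renormalizing, so that the point-mass bound $\mathbf{P}_{Y'_z}(x)=\nu(x)/p\leqslant 2^{-k}/p$, i.e.\ $\Hm{Y'_z}\geqslant k-\log(1/p)$, holds automatically. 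The two load-bearing steps are exactly the ones you isolate: (i) the achievable values of $\nu(S)$ form the interval $[\max(0,\,p-(1-b)),\,\min(p,\,b)]$, which is nonempty since $0\leqslant b\leqslant 1$ and $0<p\leqslant 1$ (a detail worth stating explicitly so the projection is well defined); and (ii) the clamping error is at most $\epsilon$ because a single term of a mixture is dominated by the whole mixture, $p\,\mathbf{P}_{X|Z=z}(S)\leqslant\mathbf{P}_X(S)\leqslant\mathbf{P}_Y(S)+\epsilon$, together with the symmetric estimate $p\,\mathbf{P}_{X|Z=z}(S^c)\leqslant\mathbf{P}_X(S^c)\leqslant 1-\mathbf{P}_Y(S)+\epsilon$ on the complement, the two errors being equal because the total mass of $\nu$ is pinned to $p$. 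Your sub-measure packaging is a genuine (if modest) improvement in presentation over working with explicit case-by-case rescalings of $Y$ on $S$ and $S^c$: it decouples the entropy requirement, handled once and for all by the cap $\nu\leqslant\mathbf{P}_Y$, from the distinguishing requirement, which becomes a projection onto an interval, and it makes transparent both why the loss per conditioning is exactly $1/\mathbf{P}_Z(z)$ and why no construction can do better when $p\,a$ falls outside the feasible interval.
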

\noindent Now we are in position to prove the following (tight) chain rule. 
\begin{theorem}\label{ChainRule_IfModulus}
Let $X,Z_1,Z_2$ be as in Lemma\  \ref{ChainRule_MinEntropy} and $\cD$ be a class of boolean functions. 
Suppose that $\HabsAv{\left.X\right|Z_1}{\cD}{\epsilon} \geqslant k$. Then $\HabsAv{\left.X\right|Z_1,Z_2}{\cD}{2^{m_2}\epsilon} \geqslant k-m_2$.
\end{theorem}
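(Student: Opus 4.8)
The plan is to derive the chain rule from a slice-by-slice application of the leakage lemma (Lemma~\ref{LeakageLemma}), exploiting the fact that modulus entropy controls the \emph{average absolute} per-value contribution $\mathbf{E}_{z}\left|\epsilon_D(z)\right|$ rather than only the signed one. First I would fix a boolean $D\in\cD$ acting on triples $(x,z_1,z_2)$ and, for every fixed value $z_2$, pass to its restriction $D_{z_2}(x,z_1):=D(x,z_1,z_2)$, which is again a boolean function on $\{0,1\}^{n+m_1}$ and hence belongs to $\cD$ (natural distinguisher classes are closed under hardwiring input bits). Applying the hypothesis $\HabsAv{X|Z_1}{\cD}{\epsilon}\geqslant k$ to each $D_{z_2}$ yields a variable $Y^{z_2}$ with $\HmAv{Y^{z_2}|Z_1}\geqslant k$ such that, writing $\epsilon(z_1,z_2):=\mathbf{E}_{x\leftarrow X|Z_1=z_1}D(x,z_1,z_2)-\mathbf{E}_{x\leftarrow Y^{z_2}|Z_1=z_1}D(x,z_1,z_2)$, one has $\mathbf{E}_{z_1}\left|\epsilon(z_1,z_2)\right|\leqslant\epsilon$.

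The key step is then to pass from conditioning on $Z_1$ to conditioning on $(Z_1,Z_2)$ by invoking Lemma~\ref{LeakageLemma} \emph{inside} each slice $Z_1=z_1$, with base distribution $X|Z_1=z_1$, reference $Y^{z_2}|Z_1=z_1$, and conditioning on the event $Z_2=z_2$ (which inside the slice has probability $\mathbf{P}(Z_2=z_2\mid Z_1=z_1)$). For each $(z_1,z_2)$ this produces a distribution $Y'_{z_1,z_2}$ with
\[ \mathbf{H}_\infty(Y'_{z_1,z_2})\geqslant \mathbf{H}_\infty(Y^{z_2}|Z_1=z_1)-\log\frac{1}{\mathbf{P}(Z_2=z_2\mid Z_1=z_1)} \]
and distinguishing advantage against $D(\cdot,z_1,z_2)$ at most $\left|\epsilon(z_1,z_2)\right|/\mathbf{P}(Z_2=z_2\mid Z_1=z_1)$. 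I then assemble the witness $Y'$ for the conclusion by declaring $Y'|Z_1Z_2=z_1z_2:=Y'_{z_1,z_2}$, which specifies a joint distribution of $(Y',Z_1,Z_2)$ over the true marginal of $(Z_1,Z_2)$.

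Both remaining estimates rest on the same cancellation $\mathbf{P}(z_1,z_2)=\mathbf{P}(z_1)\,\mathbf{P}(z_2\mid z_1)$, which exactly kills the $1/\mathbf{P}(z_2\mid z_1)$ blow-ups introduced by the leakage lemma. For the advantage,
\[ \mathbf{E}_{(z_1,z_2)}\left|\mathbf{E}_{x\leftarrow X|Z_1Z_2=z_1z_2}D(x,z_1,z_2)-\mathbf{E}_{x\leftarrow Y'_{z_1,z_2}}D(x,z_1,z_2)\right|\leqslant\sum_{z_2}\mathbf{E}_{z_1}\left|\epsilon(z_1,z_2)\right|\leqslant 2^{m_2}\epsilon, \]
since there are at most $|\mathrm{supp}(Z_2)|\leqslant 2^{m_2}$ values of $z_2$. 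The identical manipulation applied to the maximal probabilities gives $\mathbf{E}_{(z_1,z_2)}\left[\max_y\mathbf{P}_{Y'|Z_1Z_2=z_1z_2}(y)\right]\leqslant\sum_{z_2}\mathbf{E}_{z_1}\left[\max_y\mathbf{P}_{Y^{z_2}|Z_1=z_1}(y)\right]\leqslant 2^{m_2}\cdot2^{-k}=2^{-(k-m_2)}$, i.e. $\HmAv{Y'|Z_1,Z_2}\geqslant k-m_2$, which is exactly the entropy claimed. Note that the per-slice use of Lemma~\ref{LeakageLemma} needs only the \emph{actual} worst-case entropy $\mathbf{H}_\infty(Y^{z_2}|Z_1=z_1)$ of each slice, while the average hypothesis $\HmAv{Y^{z_2}|Z_1}\geqslant k$ enters only at the final averaging step.

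I expect the main difficulty to be conceptual rather than computational: one must recognize that the leakage lemma should be applied one slice at a time and that the probability weight $\mathbf{P}(z_1,z_2)$ precisely compensates the advantage and min-entropy penalties $1/\mathbf{P}(z_2\mid z_1)$ it introduces, so that only the \emph{count} of values of $z_2$ survives. It is exactly here that the modulus definition is indispensable: the surviving sum is over the \emph{absolute} per-slice advantages $\mathbf{E}_{z_1}\left|\epsilon(z_1,z_2)\right|$, each bounded by $\epsilon$ through the hypothesis; had we assumed only ordinary metric entropy, which controls $\left|\mathbf{E}_{z_1}\epsilon(z_1,z_2)\right|$, the sign cancellations across slices discussed after Def.~\ref{ModulusEntropy} would leave this sum of absolute values uncontrolled and the argument would break.
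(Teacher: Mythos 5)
Your proposal is correct and follows essentially the same route as the paper's own proof: fix $D$, hardwire $z_2$ to invoke the modulus hypothesis and obtain $Y^{z_2}$, apply Lemma~\ref{LeakageLemma} slice-by-slice to build $Y'_{z_1,z_2}$, glue these into a witness, and let the factorization $\mathbf{P}(z_1,z_2)=\mathbf{P}(z_1)\mathbf{P}(z_2\mid z_1)$ cancel the $1/\mathbf{P}(z_2\mid z_1)$ penalties in both the advantage and the min-entropy estimates. Your explicit remark that $\cD$ must be closed under hardwiring $z_2$ only makes precise an assumption the paper leaves implicit, and your closing observation about why metric entropy would not suffice matches the paper's discussion after Definition~\ref{ModulusEntropy}.
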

\begin{proof}
Fix a distinguisher $D = D\left(x,z_1,z_2\right)$. We will construct a distribution $(Y,Z_1,Z_2)$ such that $\HmAv{Y\left|Z_1,Z_2\right.}\geqslant k-m_2$ and $D$ cannot distinguish $(X,Z_1,Z_2)$ from $(Y,Z_1,Z_2)$ with advantage better than $2^{m_2}\epsilon$. For any $z_2$, let $\left(Y^{z_2},Z_1\right)$ be a distribution corresponding to $D\left(\cdot,z_2\right)$ in the sense of Definition \ref{ModulusEntropy} (we write $Y^{z_2}$ as this distribution depends also on $z_2$).  More precisely, $\left(Y^{z_2},Z_1\right)$ is such that
\begin{equation}\label{eq:precis}
 \mathbf{E}_{z_1\leftarrow Z_1} \underbrace{\left|\mathbf{E}_{x\leftarrow (X|Z_1=z_1)} D(x,z_1,z_2) - \mathbf{E}_{x\leftarrow (Y^{z_2}|Z_1=z_1)} D(x,z_1,z_2)\right|}_{\epsilon_D(z_1,z_2) :=} \leqslant \epsilon
\end{equation}
holds (cf.\ (\ref{modulus_inequality}) in Definition \ref{ModulusEntropy}).  For every pair $(z_1,z_2)$ let $\epsilon_D(z_1,z_2)$ denote the value within the first expected value sign, as indicated on (\ref{eq:precis}).
Now, Lemma \ref{LeakageLemma}  implies that for any $z_1,z_2$ there exists a distribution $Y'_{z_1,z_2}$ such that 
\begin{equation}\label{eq:3.2}
 \left| \mathbf{E} D\left(X\left|Z_1=z_1,Z_2=z_2\right.,z_1,z_2 \right) - \mathbf{E} D\left(Y'_{z_1,z_2},z_1,z_2 \right) \right| \leqslant \frac{\epsilon_D\left(z_1,z_2\right)}{\mathbf{P}_{Z_2|Z_1=z_1}\left(z_2\right)}
\end{equation}
and its min-entropy $\Hm{Y'_{z_1,z_2}}$ is at least  $k\left(z_1,z_2\right)$, where 
\begin{equation}\label{eq:3.3}
 k\left(z_1,z_2\right) 
 \geqslant  \mathbf{H}_{\infty}\left(\left. Y^{z_2}\right| {Z_1=z_1}\right)-\log(1/\mathbf{P}_{Z_2|Z_1=z_1}\left(z_2\right))
\end{equation}
Define $\left(Y,Z_1,Z_2\right)$ by $(Y\left|Z_1=z_1,Z_2=z_2\right.) \eqd Y'_{z_1,z_2}$. Now we have \begin{eqnarray*}
   &&\mathbf{E}_{\left( z_1,z_2 \right) \leftarrow \left(Z_1,Z_2\right)} \overbrace{\left| \mathbf{E}_{x\leftarrow X\left|Z_1=z_1,Z_2=z_2\right.}D\left(x,z_1,z_2\right) -\mathbf{E}_{x\leftarrow Y'_{z_1,z_2}}D\left(x,z_1,z_2\right) \right|}^{\leqslant \frac{\epsilon_D\left(z_1,z_2\right)}{\mathbf{P}\left(Z_2=z_2|Z_1 = z_1\right)} \mathrm{\ (by (\ref{eq:3.2}))}}  \\
   & \leqslant & \sum\limits_{z_1,z_2}\mathbf{P}((Z_1,Z_2) = (z_1,z_2)) \cdot \frac{\epsilon_D\left(z_1,z_2\right)}{\mathbf{P}\left(Z_2=z_2|Z_1 = z_1\right)} \\
& =  & \sum\limits_{z_1, z_2}\mathbf{P}\left(Z_1=z_1\right)\epsilon_D\left(z_1,z_2 \right) =  \sum\limits_{z_2}\mathbf{E}_{z_1\leftarrow Z_1}\epsilon_D\left(z_1,z_2\right) \leqslant \sum\limits_{z_2} \epsilon = 2^{m_2}\epsilon
\end{eqnarray*}
where the last inequality follows from (\ref{eq:precis}).  It remains to prove that \\ $\HmAv{Y\left|Z_1,Z_2\right.}\geqslant k-m_2$. We have:
\begin{align*}
 \mathbf{E}_{\left(z_1,z_2\right)\leftarrow \left(Z_1,Z_2\right)} 2^{-k\left(z_1,z_2\right)} & \leqslant   \mathbf{E}_{\left(z_1,z_2\right)\leftarrow \left(Z_1,Z_2\right)} \left[ \frac{ \max\limits_{x}\mathbf{P}\left[ \left.Y^{z_2}=x\right|Z_1=z_1\right] }{\mathbf{P}_{Z_2|Z_1=z_1}( z_2 )} \right] \\
&=  \sum\limits_{z_1,z_2} \max\limits_{x}\mathbf{P}\left[ \left.Y^{z_2}=x\right|Z_1=z_1\right]\cdot \mathbf{P}_{Z_1} (z_1) & \\ 
&=  \sum\limits_{z_2} \mathbf{E}_{z_1\leftarrow Z_1}\left[ \max\limits_{x}\mathbf{P}\left[\left. Y^{z_2}=x \right| Z_1=z_1 \right] \right] \leqslant 2^{m_2}\cdot 2^{-k}
\end{align*}
where the first step follows from (\ref{eq:3.3}) and the last one from $\widetilde{\mathbf{H}}_{\infty}\left(\left.Y^{z_2}\right|Z_1\right)\geqslant k$. 
\end{proof}
\begin{remark}
Note that the entropy obtained after the leakage is again  the Modulus Entropy. Thus, one can apply this theorem several times. The samplability restriction in Thm. \ref{ChainRule_ifSamplable} does not have this feature.
\end{remark}

\paragraph{Chain Rule for entropy against different circuits classes}
Theorem \ref{ChainRule_IfModulus} deals only with entropy against boolean deterministic distinguishers $\cD$. It is natural to ask if one could replace this class with a more general one, in particular, would the theorem still hold if, in its statement, $\cD$ was equal to the class of randomized or real-valued distinguishers.   We answer this question affirmatively in   Lemma \ref{lemma:general} below.  To make its statement as strong as possible, in the assumption we use the Modulus Entropy against boolean deterministic circuits as the weakest option and the HILL Entropy as the strongest notion in the assertion.\footnote{Recall that for the HILL Entropy all kinds of circuits: deterministic boolean, deterministic real valued, randomized boolean are equivalent \cite{Fuller2011} thus we can abbreviate the notation writing just $\mathbf{H}^{\textup{HILL},s',\epsilon'}\left(X|Z \right)$. }
\begin{lemma}\label{lemma:general}
Let $X,Z$ be as in Theorem \ref{LeakageLemma_Computational}.
Suppose that $\HabsAv{X|Z}{s}{\epsilon} \geqslant k$. Then $\mathbf{H}^{\textup{HILL},s',\epsilon'}\left(X|Z \right)\geqslant k'$, where 
$\epsilon' = \epsilon + 2\delta$, $s' = s\cdot \mathcal{O}\left( \frac{\delta^2}{n+m}\right), k'=k-\log\frac{1}{\delta}$.
\end{lemma}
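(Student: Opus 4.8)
The plan is to prove Lemma~\ref{lemma:general} through a chain of conversions, moving from the weakest notion (Modulus Entropy against boolean deterministic circuits) up to the strongest (HILL Entropy), picking up the stated losses in security and quantity along the way. The natural intermediate stop is the ordinary Metric Entropy, because we already have the Metric-to-HILL conversion available as Theorem~\ref{MetricHILL_Conversion} (via Remark~\ref{MetricHILL_Conversion_Discussion}), which is exactly what produces the characteristic $s' = s\cdot\mathcal{O}\left(\delta^2/(n+m)\right)$ circuit-size loss and the $+\delta$ term in the advantage, for $\Omega = \{0,1\}^{n+m}$ so that $\log|\Omega| = n+m$.

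First I would pass from Modulus Entropy to Metric Entropy. The key observation, already flagged in the discussion after Definition~\ref{ModulusEntropy}, is that the only difference between the two notions is the placement of the absolute value relative to the expectation over $z$, so by the triangle inequality $\left|\mathbf{E}_{z\leftarrow Z}\,\epsilon_D(z)\right| \leqslant \mathbf{E}_{z\leftarrow Z}\left|\epsilon_D(z)\right|$, which shows immediately that the modulus inequality~(\ref{modulus_inequality}) implies the Metric condition. Hence $\HabsAv{X|Z}{s}{\epsilon}\geqslant k$ gives $\HmtrAv{X|Z}{(\mathrm{det}\{0,1\},s)}{\epsilon}\geqslant k$ with no loss at all, since the witnessing $Y$ for each $D$ carries over directly. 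Next I would upgrade the distinguisher class from boolean deterministic to real-valued (or at least ensure closure under complementation, $D^c = \mathbf{1}-D$), which is needed to invoke Theorem~\ref{MetricHILL_Conversion}; for Metric Entropy this upgrade is standard and either free or cheap, as one can simulate a $[0,1]$-valued circuit by the expectation of a boolean one, or appeal to the known robustness of Metric Entropy under the choice of circuit type.

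Then I would apply the Metric-to-HILL conversion. Setting $\mathcal{C}$ to be the convex set of all distributions $(Y,Z)$ with $\HmAv{Y|Z}\geqslant k$ (convexity is what the theorem requires, and the average-min-entropy constraint is preserved under convex combinations), Theorem~\ref{MetricHILL_Conversion} converts indistinguishability from \emph{all} of $\mathcal{C}$ against the enlarged class $\mathcal{D}'$ into indistinguishability from \emph{some} member of $\mathcal{C}$ against $\mathcal{D}$, at the cost of the $\delta^2/(n+m)$ blow-up in the number of convex-combination terms (hence in circuit size) and a $+\delta$ additive term in advantage. This yields HILL Entropy $\HhllAv{X|Z}{s'}{\epsilon+\delta}\geqslant k$ with $s' = s\cdot\mathcal{O}(\delta^2/(n+m))$. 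Finally, the residual gap — the extra $+\delta$ in the advantage (giving $\epsilon' = \epsilon + 2\delta$) together with the quantity loss $k' = k - \log(1/\delta)$ — I would absorb by the average-to-worst-case style rounding of the witness distribution, trading $\log(1/\delta)$ bits of entropy for a $\delta$-fraction of probability mass on which the bound is allowed to fail, exactly as in Lemma~\ref{Average_WorstCase_Conversion} and Lemma~\ref{Computational_Average_WorstCase_Conversion}.

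The main obstacle I anticipate is bookkeeping the two separate $\delta$-contributions so that they combine into the single clean $\epsilon' = \epsilon + 2\delta$ rather than, say, $\epsilon + \delta + \delta'$ with two independent parameters; this forces a careful choice to use the \emph{same} $\delta$ both in the Barak--Shaltiel--Wigderson approximation step of Theorem~\ref{MetricHILL_Conversion} and in the entropy-rounding step, and to verify that the convex set $\mathcal{C}$ is genuinely convex under the average-min-entropy constraint so that the theorem applies verbatim. A secondary subtlety is justifying the distinguisher-class upgrade from deterministic boolean to the real-valued class demanded by the conversion theorem without incurring an additional multiplicative security loss; I would lean on the known equivalences for Metric Entropy (cited from \cite{Fuller2011}) to keep this step essentially free, so that all the quantitative cost is concentrated in the single invocation of Theorem~\ref{MetricHILL_Conversion}.
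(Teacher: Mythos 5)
Your overall skeleton --- triangle inequality to get Metric entropy, then the Barak--Shaltiel--Wigderson conversion (Theorem~\ref{MetricHILL_Conversion}) producing the size loss $s\cdot\mathcal{O}\left(\delta^2/(n+m)\right)$ and one $+\delta$, plus an average-to-worst-case rounding supplying the other $+\delta$ and the $\log(1/\delta)$ entropy loss --- is the same as the paper's, and your accounting of the losses is correct. But the \emph{order} of your steps creates a genuine gap: you perform the distinguisher-class upgrade (deterministic boolean to deterministic $[0,1]$-valued) while still at the \emph{average-case} conditional Metric entropy, and you dismiss this as ``standard and either free or cheap''. It is not. The paper proves this equivalence only for the \emph{worst-case} conditional Metric entropy (Theorem~\ref{nonaverage_passing_to_realvalued} in Appendix~\ref{app:TechnicalResults}), and that proof genuinely uses the worst-case constraint: the witness $(Y^{+},Z)$ that is flat, for each $z$, on $\Max^k_{D'(\cdot,z)}$ simultaneously maximizes $\mathbf{P}\left[D'(Y,Z)>t\right]$ for \emph{every} threshold $t$, which is what allows one to pick a single threshold and pass to a boolean $D''$. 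Under the average-case constraint $\mathbf{E}_{z\leftarrow Z}\max_x \mathbf{P}_{Y|Z=z}(x)\leqslant 2^{-k}$ no such simultaneous maximizer exists (the optimal allocation of ``flatness'' across the values of $z$ is a knapsack-type allocation that changes with the threshold), so the thresholding argument breaks down; neither this paper nor \cite{Fuller2011} supplies an average-case version of this equivalence. Your fallback remark that a $[0,1]$-valued circuit is an expectation of boolean ones only relates $\mathrm{det}[0,1]$ to \emph{randomized} boolean circuits; for Metric entropy (unlike HILL, cf.\ the paper's footnote to this very lemma) deterministic and randomized boolean distinguishers are not interchangeable for free, because in the Metric definition each distinguisher gets its own witness $Y$ and witnesses cannot be averaged.

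The repair is exactly the paper's ordering: first apply the average-to-worst-case conversion at the Metric level (as in Lemma~\ref{Computational_Average_WorstCase_Conversion}), obtaining $\Hmtr{X|Z}{\textup{det}\{0,1\},s}{\epsilon+\delta}\geqslant k-\log(1/\delta)$; only then invoke Theorem~\ref{nonaverage_passing_to_realvalued} --- now legitimately, since one is in the worst case --- to pass to $[0,1]$-valued distinguishers at cost $s'=s+\mathcal{O}(1)$; and finally apply Theorem~\ref{MetricHILL_Conversion} with $\mathcal{C}$ the convex set of all $(Y,Z)$ with $\Hm{Y|Z}\geqslant k-\log(1/\delta)$, which gives the worst-case HILL bound with $\epsilon'=\epsilon+2\delta$ directly, making your final rounding step unnecessary. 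That final step of yours (rounding the HILL witness on a $\delta$-fraction of bad $z$'s) is in itself sound; the single broken link is the boolean-to-real-valued upgrade in the average-case setting.
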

\begin{proof}
If $\HabsAv{X|Z}{s}{\epsilon} \geqslant k$ then $\HmtrAv{X|Z}{\textup{det}\{0,1\},s}{\epsilon}\geqslant k$, as
we pointed out in the discussion after Lemma \ref{Computational_Average_WorstCase_Conversion}. Lemma \ref{Computational_Average_WorstCase_Conversion} yields $\Hmtr{X|Z}{\textup{det}\{0,1\},s}{\epsilon+\delta}\geqslant k-\log\frac{1}{\delta}$. Since for the Metric Worst-Case Entropy it makes no significant difference whether we use real-valued or boolean distinguishers (see Theorem \ref{nonaverage_passing_to_realvalued} in Appendix \ref{app:TechnicalResults}), we obtain $\widetilde{\mathbf{H}}^{\textup{Metric},s',[0,1],\epsilon+\delta} (X|Z)\geqslant k-\log\frac{1}{\delta}$ where $s'= s+\mathcal{O}(1)$. The claim follows now from Theorem \ref{MetricHILL_Conversion}.
\end{proof}


\section{Passing to Modulus Entropy}

While the modulus entropy, as shown in Theorem \ref{ChainRule_IfModulus}, solves the leakage chain rule problem, it keeps being rather a technical assumption. We will give some concrete examples where its definition is fulfilled, and thus admits the chain rule. In comparison to the assertion of Theorem \ref{ChainRule_IfModulus}, they rely on some other assumptions added to the Metric Entropy of $X|Z$. Conversion to the modulus entropy, with estimated loss in parameters, is summarized in Table \ref{tbl:Conversions}. 
\begin{table}[ht]\label{tbl:Conversions}
\centering
{\small
\scalebox{0.9}
{
\begin{tabular}{| l | l | l | l | l | l |} \hline 
{\tabincell{l}{\\ Additional assumptions on \\ $\widetilde{\mathbf{H}}^{\textup{Metric},\{0,1\},s,\epsilon}\left(X|Z\right) \geqslant k$}}  & \multicolumn{4}{|c|}{Our conversion: $\HabsAv{X|Z}{s'}{\epsilon'}\geqslant k'$} \\ 
 & $k'$ & $\epsilon'$ & $s'$ &\\ \hline 
\tabincell{l}{(a) Decomposabe entropy \cite{Fuller2011}}  & $k$ & $\epsilon$ & $s$ & Thm.\ \ref{Decomposable_implies_Modulus} \\ \hline 
\tabincell{l}{(b) Samplability of $Y|Z=z$ given $z$, \\ where $(Y,Z) \sim^{\epsilon,s} (X,Z)$  \cite{Chung2011}}  & $k-\mathcal{O}\left(\log\frac{1}{\epsilon}\right)$ & $\mathcal{O}\left({\epsilon}\right)$ &  $\mathcal{O}\left(s /\frac{1}{\epsilon^2}\right)$ & Thm.\ \ref{Samplable_implies_Modulus}  \\ \hline 
 \tabincell{l}{(c) Entropy against $\mathrm{poly(n)}$-circuits,  given \\ an access to an $\NP$ oracle over $\{0,1\}^n$}  & $k-\mathcal{O}\left(\log\frac{1}{\epsilon}\right)$ & $\mathcal{O}\left(\sqrt{\epsilon}\right)$ & $\mathcal{O}\left(s/\mathrm{poly}\left(n,\,\frac{1}{\epsilon}\right) \right)$ & \tabincell{l}{Thm.\ \ref{Counting_implies_Modulus}\\ (point \ref{Counting_implies_Modulus2})} \\ \hline
 \tabincell{l}{(d) Entropy very high, \\ i.e. $k > n-\mathcal{O}\left(\log\frac{1}{\epsilon}\right)$} & $k-\mathcal{O}\left(\log\frac{1}{\epsilon}\right)$ & $\mathcal{O}\left({\sqrt{\epsilon}}\right)$ & $\mathcal{O}\left(s / \frac{m+n}{\epsilon^3}\log\frac{1}{\epsilon} \right)$  & \tabincell{l}{Thm.\ \ref{Counting_implies_Modulus}\\ (point \ref{Counting_implies_Modulus1})}  \\ \hline 
 (e) None & $k$ & $2^{t}\epsilon $ & $s-\mathcal{O}\left(2^{m-t}m\right)$ & Thm.\ \ref{Metric_implies_Modulus} \\ \hline   
  \tabincell{l}{ (f) $X$ is $(\epsilon,s)$ squared-indistinguishable \\ from $Y$ given $Z$, and $\HmAv{Y|Z}\geqslant k$ } & $k$ & $\sqrt{\epsilon} $ & $s$ & Thm.\ \ref{SquaredInd_implies_Modulus} \\ \hline   
\end{tabular}} \label{table:1}
}\caption{Conversions to the modulus entropy}
\end{table}
\newline
\noindent As shown, some of these assumptions were already studied in the leakage literature. The proofs of conversions will be given in the next section.

\subsection{Benefits of using Modulus Entropy}

To summarize, let us mention the three key features of the modulus entropy:
\begin{enumerate}[(a)]
\item it implies the metric entropy which is widely used in the leakage-resilient cryptography,
\item it allows to apply the tight chain rule multiple times, and
\item it can be obtained from the known assumptions that guarantee the chain rule (decomposability, samplability) and from other important or at least non-trivial cases  (squared-indistinguishability, high pseudo-entropy, NP-oracle)
\end{enumerate}

\paragraph{Modulus Entropy vs Samplability and Decomposability}
Comparing the conversion results in the table with Theorems \ref{ChainRule_ifDecomposable} and \ref{ChainRule_ifSamplable}, we see that Modulus Entropy is a weaker restriction and still guarantees the chain rule with at least comparable quality. Starting from decomposability or samplability, converting to the Modulus Entropy first and applying the chain rule next (and possibly translating into the HILL entropy) yields the same loss as the direct use of that assumptions.

\section{Proofs of Conversion Results}

Throughout all the proofs in this section, $X,Z$ are random variables over $\{0,1\}^{n}$ and $\{0,1\}^m$ respectively. The proofs are based on the following technical lemma.

\begin{lemma}\label{ModulusEntropy_main_lemma}
Let $X,Z$ be random variables over $\{0,1\}^{n},\{0,1\}^m$.  Suppose that $D$ is such that for all distributions $(Y,Z)$ with $\Hm{Y|Z}\geqslant k$ the following holds:
\begin{equation}\label{eq:above}
 \mathbf{E}_{z\leftarrow Z}\left| \EvAdd{D}{X|Z=z}{x}{z} - \EvAdd{D}{Y|Z=z}{x}{z}\right| \geqslant \epsilon.
\end{equation}
 Then either for $D' = D$ or for $D' = D^c$ we have that for all distributions $(Y,Z)$ with $\mathbf{H}_{\infty}\left(Y|Z\right)\geqslant k$ the following is true:
\begin{equation*}\label{eq:keyestimate}
  \mathbf{P}_{(x,z)\leftarrow (X,Z)}  \left[ D'(x,z) - \EvAdd{D'}{Y|Z=z}{x}{z} \geqslant \epsilon/4 \right] \geqslant {\epsilon^2}/{16}.
\end{equation*} 
\end{lemma}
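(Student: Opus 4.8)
The plan is to prove the contrapositive: assuming that \emph{both} candidate directions fail, I build a single admissible $Y$ for which the modulus advantage drops below $\epsilon$, contradicting the hypothesis. Abbreviate $a(z)=\EvAdd{D}{X|Z=z}{x}{z}$, $b_Y(z)=\EvAdd{D}{Y|Z=z}{x}{z}$, and $g_Y(z)=a(z)-b_Y(z)$, so the hypothesis reads $\mathbf{E}_{z\leftarrow Z}|g_Y(z)|\geqslant\epsilon$ for every $Y$ with $\Hm{Y|Z}\geqslant k$. The first step is a reduction using that $D'$ is boolean: the event $D'(x,z)-b_Y(z)\geqslant\epsilon/4$ forces $D'(x,z)=1$ and $b_Y(z)\leqslant 1-\epsilon/4$, so the target probability for $D'=D$ equals $\mathbf{E}_z\!\left[a(z)\mathbf{1}[b_Y(z)\leqslant 1-\epsilon/4]\right]$ and the one for $D'=D^c$ equals $\mathbf{E}_z\!\left[(1-a(z))\mathbf{1}[b_Y(z)\geqslant\epsilon/4]\right]$.

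Now suppose $D'=D$ fails for some $Y_1$ and $D'=D^c$ fails for some $Y_2$, both with $\Hm{Y_i|Z}\geqslant k$. Failure means $\sum_{z}\mathbf{P}_Z(z)a(z)\mathbf{1}[b_{Y_1}(z)\leqslant 1-\epsilon/4]<\epsilon^2/16$ and $\sum_z\mathbf{P}_Z(z)(1-a(z))\mathbf{1}[b_{Y_2}(z)\geqslant\epsilon/4]<\epsilon^2/16$. A short reverse-Markov estimate upgrades these to a one-sidedness statement: splitting $\mathbf{E}_z g_{Y_1}^+$ over the sets $\{b_{Y_1}\leqslant 1-\epsilon/4\}$ and its complement, and using $a\leqslant 1$ on the first (hence bounded by the small sum) and $g_{Y_1}<\epsilon/4$ on the second, gives $\mathbf{E}_z g_{Y_1}^+<\epsilon/4+\epsilon^2/16$; symmetrically $\mathbf{E}_z g_{Y_2}^-<\epsilon/4+\epsilon^2/16$. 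Informally, $Y_1$ can only distinguish ``downward'' and $Y_2$ only ``upward''.

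The decisive step is to splice $Y_1$ and $Y_2$ into one $Y^*$ forcing cancellation. For each $z$ independently I set $Y^*|Z=z=\lambda(z)\,(Y_1|Z=z)+(1-\lambda(z))\,(Y_2|Z=z)$, choosing $\lambda(z)\in[0,1]$ so that $b_{Y^*}(z)$ is the point of the segment joining $b_{Y_1}(z)$ and $b_{Y_2}(z)$ closest to $a(z)$; in particular $b_{Y^*}(z)=a(z)$, i.e. $g_{Y^*}(z)=0$, whenever $a(z)$ lies between $b_{Y_1}(z)$ and $b_{Y_2}(z)$. On the remaining $z$ the residual is controlled pointwise: if $a(z)$ overshoots both values then $|g_{Y^*}(z)|\leqslant g_{Y_1}^+(z)$, and if it undershoots both then $|g_{Y^*}(z)|\leqslant g_{Y_2}^-(z)$. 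Summing yields $\mathbf{E}_z|g_{Y^*}(z)|\leqslant\mathbf{E}_z g_{Y_1}^++\mathbf{E}_z g_{Y_2}^-<\epsilon/2+\epsilon^2/8<\epsilon$, contradicting the hypothesis and completing the contrapositive.

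The point I would treat most carefully — and the reason the argument closes cleanly — is the entropy bookkeeping of $Y^*$. It is essential that the admissibility constraint here is the \emph{worst-case} $\Hm{Y|Z}\geqslant k$, i.e. the per-$z$ bound $\max_x\mathbf{P}_{Y|Z=z}(x)\leqslant 2^{-k}$: since each slice $Y^*|Z=z$ is a convex combination of two slices already meeting this bound, $Y^*$ inherits it exactly, with no loss in $k$ and regardless of how $\lambda(z)$ varies with $z$. This is precisely where a naive attempt would fail under the \emph{average} constraint $\HmAv{Y|Z}\geqslant k$, for which a per-$z$ recombination can cost a factor $2$ in the budget; the worst-case formulation is what makes the per-$z$ matching free, so I would make sure the lemma is invoked with $\Hm{\cdot}$ throughout. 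The sign analysis of the previous paragraph and this entropy observation are the two ingredients doing all the work; the constant $\epsilon^2/16$ is generous, as the same computation in fact tolerates any failure threshold below roughly $\epsilon/4$.
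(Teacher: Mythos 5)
Your proof is correct, but it takes a genuinely different route from the paper's. The paper argues \emph{directly}: it fixes the minimizing admissible $Y^{+}$, observes that the minimization over $\{(Y,Z):\Hm{Y|Z}\geqslant k\}$ decomposes per $z$ (the same per-$z$ structure of the worst-case constraint that you exploit), invokes convexity of the admissible set to conclude that for each $z$ the advantage $\EvAdd{D}{X|Z=z}{x}{z}-\EvAdd{D}{Y_z}{x}{z}$ has one fixed sign over all admissible $Y_z$, builds a per-$z$ sign-aligned hybrid $D'\in\{D,D^{c},0\}$, applies a Markov-style bound twice (once over $z$, once over $x$) to get probability $\epsilon^{2}/8$ for this hybrid, and finally notes that either the $D$-part or the $D^{c}$-part of that event carries at least half the mass, giving $\epsilon^{2}/16$. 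You instead prove the contrapositive: from failure witnesses $Y_1$ (for $D$) and $Y_2$ (for $D^{c}$) you extract the one-sided bounds $\mathbf{E}_z g_{Y_1}^{+}<\epsilon/4+\epsilon^{2}/16$ and $\mathbf{E}_z g_{Y_2}^{-}<\epsilon/4+\epsilon^{2}/16$, and splice $Y_1,Y_2$ by a per-$z$ convex combination into one admissible $Y^{*}$ whose modulus advantage is below $\epsilon$. Both proofs rest on the same two structural facts --- per-$z$ decomposability and convexity of the worst-case admissible set --- but you apply convexity on the distribution side where the paper applies it on the distinguisher side. Your route buys sharper constants: as you observe, any failure threshold up to $\epsilon/4$ still yields the contradiction, so your argument in fact proves the lemma with $\epsilon^{2}/16$ strengthened to $\epsilon/4$. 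It also isolates cleanly why the \emph{worst-case} constraint $\Hm{Y|Z}\geqslant k$ is indispensable. The paper's route, in turn, never uses that $D$ is boolean, and its intermediate construction (the sign-aligned hybrid $D'$ of (\ref{eq:D})) is reused verbatim in the proofs of Theorem \ref{Metric_implies_Modulus} and Theorem \ref{nonaverage_passing_to_realvalued}, so it has value beyond this single lemma.

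One caveat. Your reduction of the target probability to $\mathbf{E}_z\bigl[a(z)\mathbf{1}\left[b_Y(z)\leqslant 1-\epsilon/4\right]\bigr]$, and its mirror form for $D^{c}$, genuinely requires $D$ to be $\{0,1\}$-valued. The lemma statement does not say this explicitly, although it is the intended reading: Definition \ref{ModulusEntropy} defines modulus entropy against deterministic boolean distinguishers, and every application of the lemma in the paper instantiates $D$ as such. If you want your argument to cover $[0,1]$-valued $D$ as well, replace that step by the direct estimate $g_{Y}^{+}(z)\leqslant \mathbf{P}_{x\leftarrow X|Z=z}\left[D(x,z)-b_{Y}(z)\geqslant \epsilon/4\right]+\epsilon/4$, obtained by splitting $\EvAdd{D}{X|Z=z}{x}{z}-b_Y(z)$ on that event and bounding $D-b_Y$ by $1$ on it and by $\epsilon/4$ off it; taking expectation over $z$ gives the same bound $\epsilon^{2}/16+\epsilon/4$ with no booleanness assumption, and the rest of your argument goes through unchanged.
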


\begin{proof}
Consider the distribution $(Y^{+},Z)$ which minimizes the left-hand side of (\ref{eq:above}). Define $
 \epsilon(z) := \left| \EvAdd{D}{X|Z=z}{x}{z} - \EvAdd{D}{\left.Y^{+}\right|Z=z}{x}{z} \right|$.
Observe that
\begin{multline*}\label{eq:techlemma:1}
 \min_{(Y,Z):\,\mathbf{H}_{\infty}(Y|Z)\geqslant k}\mathbf{E}_{z\leftarrow Z}\left| \EvAdd{D}{X|Z=z}{x}{z} - \EvAdd{D}{Y|Z=z}{x}{z} \right| = \\ =
\mathbf{E}_{z\leftarrow Z}\left[ \min_{Y_z:\,\mathbf{H}_{\infty}(Y_z)\geqslant k}\left|\EvAdd{D}{X|Z=z}{x}{z} - \EvAdd{D}{Y_z}{x}{z} \right| \right].
\end{multline*}
Therefore, for every distribution $Y_z$ with min-entropy $\Hm{Y_z}\geqslant k$ we have
\begin{equation*}\label{eq:techlemma:2} 
 \left| \EvAdd{D}{X|Z=z}{x}{z} - \EvAdd{D}{Y|Z=z}{x}{z} \right| \geqslant \epsilon(z)
\end{equation*}
Note that if $\epsilon(z)>0$ then either (a) $\EvAdd{D}{X|Z=z}{x}{z} - \EvAdd{D}{Y|Z=z}{x}{z} \geqslant \epsilon(z)$ or (b)
$\EvAdd{D}{X|Z=z}{x}{z} - \EvAdd{D}{Y|Z=z}{x}{z}\leqslant -\epsilon(z)$ holds for all $Y_z$ with $\mathbf{H}_{\infty}\left(Y_z\right)\geqslant k$. This follows from the convexity of the set of distributions $\mathbf{H}_{\infty}\left(Y_z\right)\geqslant k$, which in turn implies that all values of $\mathbf{E}_{x\leftarrow X|z=z}D\left(x,z\right)-\mathbf{E}_{x\leftarrow Y_z}D\left(x,z\right)$, over the choice of $Y_z$, form a convex set. Therefore 
\begin{equation*}\label{eq:techlemma:3} 
\mathbf{E}_{x\leftarrow X|z=z}D'\left(x,z\right)-\mathbf{E}_{x\leftarrow Y_z}D'\left(x,z\right)\geqslant \epsilon(z)  
\end{equation*}
holds for all $Y_z$ with $\mathbf{H}_{\infty}\left(Y_z\right)\geqslant k$, where $D'$ is defined, depending on $z$, by
\begin{equation}\label{eq:D}
D'(x,z) :=
\left\{
\begin{array}{ll}
   D(x,z) & \mbox{in case (a)}\\
   D^c(x,z) & \mbox{in case (b)}\\
   0 & \mbox{if $\epsilon(z) = 0$}.
\end{array}
\right.
\end{equation}
Since $\epsilon(z)\geqslant \frac{\epsilon}{2}$ holds\footnote{Throughout the proofs, we will make use of the simple Markov-style principle: let $X$ be a non-negative random variable bounded by $M$. Then $X > \frac{1}{2M}\mathbf{E}X$ with probability at least $\frac{1}{2}\mathbf{E}X$.} with probability at least $\frac{\epsilon}{2}$ over $z\leftarrow Z$, we get 
\begin{equation*}
  \EvAdd{D'}{X|Z=z}{x}{z}- \max\limits_{Y_z:\, \Hm{Y_z}\geqslant k} \EvAdd{D'}{Y|Z=z}{x}{z} \geqslant {\epsilon}/{2}
\end{equation*}
with probability at least $\frac{\epsilon}{2}$ over $z\leftarrow Z$. For every such $z$ we obtain 
\begin{equation*}
 \mathbf{P}_{x\leftarrow X|Z=z}\left[ D'(x,z) - \max\limits_{Y_z:\, \Hm{Y_z}\geqslant k}\EvAdd{D'}{Y|Z=z}{x}{z}  \geqslant \frac{\epsilon}{4} \right] \geqslant \frac{\epsilon}{4}
\end{equation*}
Taking expectation over $z\leftarrow Z$ we conclude that 
\begin{equation*}
 \mathbf{P}_{(x,z)\leftarrow (X,Z)}\left[ D'(x,z) - \max\limits_{Y_z:\, \Hm{Y_z}\geqslant k}\mathbf{E}D'\left(Y_z,z\right)  \geqslant \frac{\epsilon}{4} \right] \geqslant \frac{\epsilon^2}{8}.
\end{equation*}
Therefore, either for $D' = D$ or $D' = D^c$ the probability on the left-hand side of the above inequality needs to be at least $ \frac{1}{2} \cdot \frac{\epsilon^2}{8}= \frac{\epsilon^2}{16}$, which proves the claim.
\end{proof}
\subsection{Decomposable entropy}
We start by noticing that Definition \ref{DecomposableEntropy_Fuller} is stronger than our Definition \ref{ModulusEntropy}.

\begin{theorem}\label{Decomposable_implies_Modulus}
Suppose $\HmtrdAv{X|Z}{s}{\epsilon}\geqslant k$. Then $\HabsAv{X|Z}{s}{\epsilon}\geqslant k$. 
\end{theorem}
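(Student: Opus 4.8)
The plan is to unwind both definitions and observe that decomposability hands us exactly the per-outcome witnesses needed to build, for a fixed distinguisher, a single conditional distribution realizing the modulus bound. The only structural point is that both notions allow the witness distribution to depend on the distinguisher and on the conditioning value $z$, so nothing is lost by assembling the witnesses outcome-by-outcome.

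First I would fix an arbitrary $D\in\cD$. Since $D$ is boolean it is in particular a $[0,1]$-valued function, so for each fixed $z$ the map $D(\cdot,z)$ is an admissible distinguisher of size $s$ against the (unconditional) distribution $X|Z=z$. The decomposability assumption supplies numbers $\epsilon(z),k(z)$ with $\mathbf{E}\left[2^{-k(Z)}\right]=2^{-k}$ and $\mathbf{E}\left[\epsilon(Z)\right]\leqslant\epsilon$ such that $\HmtrAv{X|Z=z}{[0,1],s}{\epsilon(z)}\geqslant k(z)$ for every $z$. Applying the metric-entropy definition to the particular distinguisher $D(\cdot,z)$ then yields, for each $z$, a distribution $Y^{z}$ with $\Hm{Y^{z}}\geqslant k(z)$ and $\left|\EvAdd{D}{X|Z=z}{x}{z}-\EvAdd{D}{Y^{z}}{x}{z}\right|\leqslant \epsilon(z)$. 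I would then define $(Y,Z)$ by setting $(Y|Z=z)\eqd Y^{z}$.

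Next I would verify the two requirements of Definition \ref{ModulusEntropy} for this $Y$. For the advantage, the absolute value sits inside the expectation over $z$, so by linearity $\mathbf{E}_{z\leftarrow Z}\left|\EvAdd{D}{X|Z=z}{x}{z}-\EvAdd{D}{Y|Z=z}{x}{z}\right|\leqslant \mathbf{E}_{z\leftarrow Z}\epsilon(z)=\mathbf{E}\left[\epsilon(Z)\right]\leqslant\epsilon$. For the entropy, the piecewise construction gives $\mathbf{E}_{z\leftarrow Z}\max_{x}\mathbf{P}_{Y|Z=z}(x)=\mathbf{E}_{z\leftarrow Z}\max_{x}\mathbf{P}_{Y^{z}}(x)\leqslant \mathbf{E}_{z\leftarrow Z}2^{-k(z)}=\mathbf{E}\left[2^{-k(Z)}\right]=2^{-k}$, which is exactly $\HmAv{Y|Z}\geqslant k$. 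Since $D$ was arbitrary, this establishes $\HabsAv{X|Z}{s}{\epsilon}\geqslant k$ with unchanged parameters $k'=k$, $\epsilon'=\epsilon$, $s'=s$, matching row (a) of Table \ref{tbl:Conversions}.

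There is essentially no hard step here: the result is a definitional unpacking, and the only thing worth flagging is that the two averaging constraints imposed on $(\epsilon(z),k(z))$ in Definition \ref{DecomposableEntropy_Fuller} translate, via linearity of expectation, precisely into the averaged modulus advantage and the average min-entropy bound demanded by Definition \ref{ModulusEntropy}. The mild subtlety I would be careful to state explicitly is that the witnesses $Y^{z}$ are chosen after fixing $D$ and may depend on both $D$ and $z$, which is exactly what the ``for any $D\in\cD$ there exists $Y$'' quantifier structure of the modulus definition permits.
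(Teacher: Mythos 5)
Your proposal is correct and follows essentially the same route as the paper's proof: fix $D$, invoke decomposability to obtain per-$z$ witnesses $Y_z$ with $\Hm{Y_z}\geqslant k(z)$ and advantage at most $\epsilon(z)$, glue them into $(Y,Z)$ via $(Y|Z=z)\eqd Y_z$, and conclude by linearity of expectation for both the modulus advantage and the average min-entropy. Your write-up is simply more explicit than the paper's about the boolean-to-$[0,1]$ distinguisher inclusion and about verifying $\mathbf{E}_{z\leftarrow Z}\left[2^{-k(z)}\right]=2^{-k}$ yields $\HmAv{Y|Z}\geqslant k$.
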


\begin{proof}
Fix a distinguisher $D = D(x,z)$. According to Def. \ref{DecomposableEntropy_Fuller}, for every $z$ we have a distribution $Y_z$ such that $\Hm{Y_z}\geqslant k(z)$ and $\left|\mathbf{E}D(X|Z=z)-\mathbf{E}D(Y_z)\right|\leqslant \epsilon(z)$. Consider a distribution $(Y,Z)$ defined by $\left(Y|Z=z\right)\eqd Y_z$. Since $\mathbf{E}_{z\leftarrow Z}\epsilon(z) \leqslant \epsilon$, we obtain inequality (\ref{modulus_inequality}). In turn, the assumptions on $k(z)$ implies $\HmAv{Y|Z}\geqslant k$.
\end{proof}

\noindent The following theorem converts Metric Entropy into Modulus Entropy (cf. case (e) in Table \ref{table:1}). Its principal significance is that the equivalence between both definitions is established, provided that $Z$ is sufficiently short (grows at most logarithmically in the security parameters). 

\begin{theorem}\label{Metric_implies_Modulus}
Suppose that $\mathbf{H}^{\textup{Metric},\{0,1\},s,\epsilon}(X|Z)\geqslant k$. Then $\Habs{X|Z}{s'}{\epsilon'}\geqslant k$, where $\epsilon' = 2^{t}\epsilon$ and $s'=s-\mathcal{O}\left(2^{m-t}m\right)$.
\end{theorem}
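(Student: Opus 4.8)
The plan is to fix a boolean distinguisher $D=D(x,z)$ of size at most $s'$ and to exhibit a single $Y$ with $\Hm{Y|Z}\geqslant k$ for which the modulus advantage $\mathbf{E}_{z\leftarrow Z}\left|\epsilon_D(z)\right|$ is at most $\epsilon'=2^t\epsilon$. The crucial observation is that because we work with the \emph{worst-case} conditional min-entropy, the constraint ``$\Hm{Y|Z}\geqslant k$'' decouples across the values of $z$: choosing $Y|Z=z$ is an independent optimization for each $z$, each subject to $\Hm{Y_z}\geqslant k$. Hence the smallest achievable modulus advantage equals $\mathbf{E}_{z\leftarrow Z}\,d(z)$, where $d(z)=\min_{Y_z:\,\Hm{Y_z}\geqslant k}\left|\mathbf{E}_{x\leftarrow X|Z=z}D(x,z)-\mathbf{E}_{x\leftarrow Y_z}D(x,z)\right|$ is the distance from $a(z):=\mathbf{E}_{x\leftarrow X|Z=z}D(x,z)$ to the interval $[\ell(z),u(z)]$ of values $\mathbf{E}_{x\leftarrow Y_z}D(\cdot,z)$ attainable under $\Hm{Y_z}\geqslant k$ (this set is an interval, by convexity of the feasible distributions). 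Assembling the per-$z$ minimizers into $Y$ already gives $\Hm{Y|Z}\geqslant k$, so it suffices to prove $\mathbf{E}_{z\leftarrow Z}\,d(z)\leqslant 2^t\epsilon$.

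To bound this sum I would transfer it, group by group, to the Metric advantage of auxiliary distinguishers. Partition $\{0,1\}^m$ into $2^t$ groups $G_1,\dots,G_{2^t}$, each of size $2^{m-t}$. For each $G_j$ I build a boolean distinguisher $D_j$ that acts only on the ``active'' values $z\in G_j$ with $d(z)>0$ and is constantly $0$ elsewhere; on an active $z$ it outputs $D(x,z)$ when $a(z)>u(z)$ and $D^c(x,z)=1-D(x,z)$ when $a(z)<\ell(z)$. This sign flip is chosen precisely so that, afterwards, $a(z)$ lies \emph{above} the attainable interval by exactly $d(z)$; consequently for every $Y_z$ with $\Hm{Y_z}\geqslant k$ the contribution $\mathbf{E}_{x\leftarrow X|Z=z}D_j-\mathbf{E}_{x\leftarrow Y_z}D_j$ is at least $d(z)$, while on all inactive $z$ the contribution is $0$. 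Summing, the signed Metric advantage of $D_j$ against \emph{any} admissible $Y$ is at least $\sum_{z\in G_j}\mathbf{P}_Z(z)\,d(z)\geqslant 0$, so applying the hypothesis $\Hmtr{X|Z}{\{0,1\},s}{\epsilon}\geqslant k$ to $D_j$ (which yields some $Y$ with advantage at most $\epsilon$) forces $\sum_{z\in G_j}\mathbf{P}_Z(z)\,d(z)\leqslant\epsilon$. Adding these $2^t$ inequalities gives $\mathbf{E}_{z\leftarrow Z}\,d(z)\leqslant 2^t\epsilon$, as required.

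It remains to account for circuit size, which dictates $s'$. The distinguisher $D_j$ calls the original circuit $D$ once and wraps it in control logic that, from $z$, decides membership in the active subset of $G_j$ and the one-bit flip decision. Since this subset has at most $2^{m-t}$ elements, both the indicator and the flip pattern can be hardcoded by testing equality against each stored value at a cost of $\mathcal{O}\left(2^{m-t}m\right)$ gates; thus $|D_j|\leqslant|D|+\mathcal{O}\left(2^{m-t}m\right)\leqslant s'+\mathcal{O}\left(2^{m-t}m\right)$, and taking $s'=s-\mathcal{O}\left(2^{m-t}m\right)$ guarantees $|D_j|\leqslant s$, so that the Metric hypothesis indeed applies to every $D_j$.

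I expect the main obstacle to be exactly the step that makes the sign flip ``honest''. Without restricting $D_j$ to the active set, an admissible $Y$ could exploit the slack available at the values $z$ with $d(z)=0$ (where $a(z)$ lies strictly inside $[\ell(z),u(z)]$) to drive the signed advantage of $D_j$ below $\sum_{z\in G_j}\mathbf{P}_Z(z)\,d(z)$ by cancellation, breaking the equivalence between a group's modulus contribution and a genuine Metric advantage. Neutralising these values — which is why $D_j$ is set to $0$ off the active set, and why one uses per-group hardcoding rather than a single global sign circuit — is the technical heart of the argument and is also what produces the precise trade-off between $\epsilon'=2^t\epsilon$ and $s'=s-\mathcal{O}\left(2^{m-t}m\right)$.
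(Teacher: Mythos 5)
Your proof is correct, and its engine is the same as the paper's: the per-$z$ sign flip (use $D$, $D^c$, or the constant $0$ according to where $a(z)$ sits relative to the attainable interval), made affordable by hardcoding at most $2^{m-t}$ values of $z$ at cost $\mathcal{O}\left(2^{m-t}m\right)$, and justified by the fact that the worst-case constraint $\Hm{Y|Z}\geqslant k$ decouples over $z$. Where you genuinely differ is in how the factor $2^t$ is extracted. The paper argues by contradiction and invokes the Metric hypothesis \emph{once}: assuming every admissible $Y$ has modulus advantage at least $\epsilon'$ against $D$, it restricts the sign-flipped circuit to the $2^{m-t}$ \emph{heaviest} values of $z$ (those maximizing $\mathbf{P}_Z(z)\epsilon'(z)$), and an averaging argument shows this subset retains at least a $2^{-t}$ fraction of the total weighted advantage, yielding a size-$s$ circuit with advantage at least $2^{-t}\epsilon'=\epsilon$ against all admissible $Y$ --- a contradiction. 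You argue directly and invoke the hypothesis $2^t$ times: fix an arbitrary partition of $\{0,1\}^m$ into $2^t$ groups of size $2^{m-t}$, obtain one inequality $\sum_{z\in G_j}\mathbf{P}_Z(z)d(z)\leqslant\epsilon$ per group, then sum. Your route dispenses with both the contradiction framing and the heaviest-subset selection (whose adaptivity to the distribution is anyway immaterial, since hardcoding makes the circuits non-uniform), at the price of using the hypothesis $2^t$ times rather than once --- which costs nothing, as Metric entropy is universally quantified over distinguishers. The parameters come out identical, and your closing remark correctly identifies the crux shared by both arguments: zeroing the circuit off the active set prevents admissible $Y$'s from producing cancellation, which is exactly the role of the clause ``$0$ if $\epsilon(z)=0$'' in the paper's definition of $D'$.
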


\begin{proof}
For the sake of contradiction suppose that for some $D$ of complexity $s'$ and for every $(Y,Z)$ such that $\mathbf{H}_{\infty}(Y|Z)\geqslant k
$ we have that 
\begin{displaymath}
\mathbf{E}_{z\leftarrow Z}\left|\mathbf{E}_{x\leftarrow X|Z=z}D\left(x,z\right)-\mathbf{E}_{x\leftarrow Y|Z=z} D\left(x,z\right) \right|\geqslant \epsilon'.
\end{displaymath}
Applying the same reasoning as at the beginning of the proof of Lemma \ref{ModulusEntropy_main_lemma}, we obtain that there exist a distinguisher $D'$ (cf.\ (\ref{eq:D})) such that for every distribution $Y_z$ with $\Hm{Y_z}\geqslant k$ it holds that
\begin{equation}\label{eq:trivial}
\mathbf{E}_{x\leftarrow X|z=z}D'\left(x,z\right)-\mathbf{E}_{x\leftarrow Y_z}D'\left(x,z\right)\geqslant \epsilon'(z),  
\end{equation}
where $\mathbf{E}_{z\leftarrow Z}\epsilon'(z)\geqslant \epsilon'$. Thus, for every  $(Y,Z)$ such that $\mathbf{H}_{\infty}\left(Y|Z\right)\geqslant k$ we have
\begin{equation*}
 \mathbf{E}_{(x,z)\leftarrow (X,Z)}D'\left(x,z\right) -\mathbf{E}_{(x,z)\leftarrow (Y,Z)}D'\left(x,z\right) \geqslant \mathbf{E}_{z\leftarrow Z}\epsilon'(z) \geqslant \epsilon'.
\end{equation*}
Recall that in the proof of Lemma \ref{ModulusEntropy_main_lemma}, the value $D'(x,z)$ is defined as equal to $D(x,z)$ or $D^{c}(x,z)$ or $0$, depending on $z$. Instead, we can follow that construction with respect to only $2^{m-t}$ ``heaviest'' values $z$ maximizing $\mathbf{P}(Z=z)\epsilon'(z)$ and setting $D'=0$ for other $z$. The obtained circuit is of size at most $s'+\mathcal{O}\left(2^{m-t}m\right) = s$ and distinguishes with the advantage at least $2^{-t}\epsilon' = \epsilon$.
\end{proof}

\subsection{The samplability assumption}

\noindent In the next theorem we deal with the samplability assumption used in \cite{Chung2011}.

\begin{theorem}\label{Samplable_implies_Modulus}
Suppose that $(X,Z)$ is $(s,\epsilon)$-indistinguishable from a distribution $(Y',Z)$, with the following properties (a) $\Hm{Y'|Z}\geqslant k$ and (b) there exists a randomized circuit $\Gamma$ receiving on its input $z\in\mathrm{supp}(Z)$ and returning samples from the distribution $Y'|Z=z$. Then
\begin{equation*}
 \Habs{X|Z}{s\cdot \frac{\epsilon^2}{64}-\mathrm{size}(\Gamma)}{8\sqrt{\epsilon}} \geqslant k - 2\log\left(1/\epsilon\right)-7.
\end{equation*}
\end{theorem}
\begin{proof}
Suppose that $\Habs{X|Z}{s'}{\epsilon'} < k'$, where $k'=k - 2\log\left({1}/{\epsilon}\right)-7$ and $\epsilon' = {\epsilon^2}/{64}$ and $s'={\epsilon^2s}/{64}-\mathrm{size}(\Gamma)$. Thus, for some $D$ of size $s'$ and every $(Y,Z)$ with $\mathbf{H}_{\infty}(Y|Z)\geqslant k'$ we have
\begin{equation}\label{eq:modulus_entropy_negation}
 \mathbf{E}_{z\leftarrow Z}\left| \EvAdd{D}{X|Z=z}{x}{z} - \EvAdd{D}{Y|Z=z}{x}{z}\right| \geqslant \epsilon'.
\end{equation}
Let $D'$ be a distinguisher obtained from Lemma \ref{ModulusEntropy_main_lemma}. Consider the following $D''$: on input $(x,z)$, which comes either from $(X,Z)$ or $(Y',Z)$ do the following:
\begin{itemize}
  \item for $i = 1$ to $\ell = \left\lceil \frac{64}{\epsilon^2}\right\rceil-1$ sample $y_i \leftarrow Y'|Z=z$  using the circuit $\Gamma$,
    \item if $D'(x,z) > \max\limits_{i=1,\ldots,l} D'\left(y_i,z\right)$ --- output $1$, otherwise output $0$.
\end{itemize}
Clearly $D''$ has complexity at most $\left(\ell+1\right) \cdot \left(s'+\mathrm{size}(\Gamma)\right)=s$. We will show that it gives sufficient distingishing advantage. We start with the following easy observation, used implicitly in \cite{Chung2011} (the proof of Lemma 16).  

\begin{lemma}\label{extreme_distribution_inequality}
For $D$ be a $[0,1]$-valued function. If $Y^{+}$ is distributed uniformly over $\Max^k_{D}$, then for any $Y$ with $\mathbf{H}_{\infty}(Y) \geqslant k + \log\frac{1}{\delta}$ we have
\begin{equation*}
 \mathbf{P}_{x \leftarrow Y}\left[  D(x) - \mathbf{E}_{x\leftarrow Y^{+}}D(x) > 0  \right] < \delta.
\end{equation*}
\end{lemma}
\noindent The proof that $D''$ is indeed a good distinguisher consists of two steps 

\begin{claim}
On input $(x,z)\leftarrow (X,Z)$ the circuit $D''$ outputs $1$ w.p. at least $\epsilon'^2/32$. 
\end{claim}
\begin{proof}
Consider a distribution $\left(Y^{+},Z\right)$ such that for every $z$ the distribution $\left. Y^{+}\right|Z=z$ is uniform over $\Max^{k}_{D(\cdot,z)}$. Since $y_i$ are independent and distributed according to $Y'|Z=z$, it follows from Lemma \ref{extreme_distribution_inequality} that $\Ev{D'}{\left.Y^{+}\right|Z=z}{x} \geqslant \max\limits_{i} D'(y_i,z)$ holds with probability at least $\left(1-{2^{k'-k}}\right)^{\ell} \geqslant 1 - \ell \cdot 2^{k'-k} \geqslant \frac{1}{2}$. Now, Lemma \ref{ModulusEntropy_main_lemma} yields  $D'(x,z) > \Ev{D'}{\left.Y^{+}\right|Z=z}{x}$ with probability at least $\frac{\epsilon'^2}{16}$ over $(x,z)$. Since sampling $y_i$ is independent from $(X,Z)$, the claim follows.
\end{proof}

\begin{claim}
On input $(y,z)\leftarrow (Y',Z)$ the circuit $D''$ outputs $1$ w.p. at most $\epsilon'^2/64$.  
\end{claim}
\begin{proof}
Note that $y$ and $y_1,\ldots,y_\ell$  are all independent copies of the distributions $Y'|Z=z$. Therefore probability that $y > \max_{i=1,\ldots,l} y_i$ is at most $\frac{1}{\ell+1} \leqslant \frac{\epsilon'^2}{64}$.
\end{proof}
\noindent From the last two claims we get 
$ \mathbf{P}\left( D''\left(X,Z\right) = 1 \right) - \mathbf{P}\left( D''\left(Y,Z\right) = 1 \right) \geqslant \epsilon'^2/64$, which completes the proof of Theorem \ref{Samplable_implies_Modulus}.
\end{proof}

\subsection{Approximate counting}
It turns out that using a technique called the {\em approximate counting}, one can show a conversion from metric to modulus entropy. However, we need some additional assumptions to achieve both: high accuracy and efficiency in counting:
\begin{theorem}\label{Counting_implies_Modulus}
Suppose that one of the following is true:
\begin{enumerate}[{} (a)]
 \item \label{Counting_implies_Modulus1} $\mathbf{H}_{}^{\textup{Metric},\mathrm{rand}\{0,1\},s,\epsilon}(X|Z) \geqslant k$ against circuits of size $s$,
 \item \label{Counting_implies_Modulus2} $\mathbf{H}_{}^{\textup{Metric},\{0,1\},s,\epsilon}(X|Z) \geqslant k$ against circuits of size $s=\mathrm{poly}(n)$, with an access to an $\NP$-oracle.
\end{enumerate}
Then we have $\Habs{X|Z}{s'}{\epsilon'} \geqslant k'$, where $\epsilon' = 8\sqrt{\epsilon}$, $k' =  k-\log\frac{1}{\epsilon}$ and $s'$ given by $s'=\mathcal{O}\left(s \cdot \frac{2^{k-n-2}\cdot \epsilon}{\log(1/\epsilon)} \right)$ in case (\ref{Counting_implies_Modulus1}) or $s'={\mathrm{poly}\left(n,\epsilon\right)}$ in case (\ref{Counting_implies_Modulus2}).
\end{theorem}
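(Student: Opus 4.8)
\textbf{Proof proposal for Theorem~\ref{Counting_implies_Modulus}.}
The plan is to reduce both cases to the key technical lemma, Lemma~\ref{ModulusEntropy_main_lemma}, and then use \emph{approximate counting} to turn the distinguisher $D'$ guaranteed by that lemma into a genuine metric-entropy distinguisher of modest size, contradicting the hypothesis. I would argue by contradiction: suppose $\Habs{X|Z}{s'}{\epsilon'} < k'$. Then for some boolean $D$ of size $s'$ the negated modulus condition~(\ref{eq:above}) holds for all $(Y,Z)$ with $\Hm{Y|Z}\geqslant k'$, with advantage threshold $\epsilon'/8 = \sqrt{\epsilon}$ playing the role of ``$\epsilon$'' in Lemma~\ref{ModulusEntropy_main_lemma}. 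Applying that lemma produces $D'\in\{D,D^c\}$ such that, writing $\eta := \sqrt{\epsilon}$, for every $(Y,Z)$ with $\Hm{Y|Z}\geqslant k'$ we have
\begin{equation*}
 \mathbf{P}_{(x,z)\leftarrow(X,Z)}\left[\, D'(x,z)-\EvAdd{D'}{Y|Z=z}{x}{z}\geqslant \eta/4\,\right]\geqslant \eta^2/16 = \epsilon/16.
\end{equation*}
The crux is that the right-hand reference quantity $\EvAdd{D'}{Y|Z=z}{x}{z}$ is minimized, for each fixed $z$, by taking $Y|Z=z$ uniform over $\Max^{k'}_{D'(\cdot,z)}$, so the inner threshold event becomes ``$D'(x,z)$ exceeds the average of $D'(\cdot,z)$ over its $2^{k'}$ smallest values''.

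The next step is to build, from $D'$, a single distinguisher $\widehat D$ witnessing metric entropy below $k$ against the allowed circuit class, which contradicts the hypothesis once we check its size. The idea is to let $\widehat D(x,z)$ estimate the threshold $\theta(x,z)$ for which $x$ sits in the top slice of $D'(\cdot,z)$, i.e.\ to decide approximately whether $D'(x,z)$ exceeds the $\Max^{k'}_{D'(\cdot,z)}$-average by at least $\eta/4$; equivalently, $\widehat D$ should output $1$ precisely on a set of the form ``$D'(x,z)$ is large relative to the $z$-conditional order statistics of $D'$''. To realize this without knowing the distribution of $X$, I would approximately count, for the given $z$, the number of $x'\in\{0,1\}^n$ with $D'(x',z)\geqslant D'(x,z)$ (or above a shifted threshold), and declare $x$ to be in the heavy slice iff this count is at most roughly $2^{k'}$. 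On the $(X,Z)$ side the displayed probability lower bound gives advantage $\geqslant \epsilon/16$ on this event; on the reference side, any $(Y,Z)$ with $\Hm{Y|Z}\geqslant k$ puts mass at most $\approx 2^{k'-k}=\epsilon$ on the heavy slice (since it has only $\approx 2^{k'}$ elements per $z$ and $Y$ is spread over $\geqslant 2^{k}$), so $\widehat D$ separates $(X,Z)$ from the whole metric-entropy-$k$ ball with advantage $\gtrsim \epsilon/16-\epsilon \geqslant \epsilon$ for small $\epsilon$, contradicting $\mathbf{H}^{\textup{Metric}}(X|Z)\geqslant k$. This is exactly where the two cases diverge in how the approximate count is implemented.

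The approximate-counting subroutine is the main obstacle, and the two hypotheses exist precisely to supply it. In case~(\ref{Counting_implies_Modulus2}), exact or near-exact counting of $\{x': D'(x',z)\geqslant t\}$ is directly an $\NP$-counting query, so an $\NP$-oracle (using the standard Stockmeyer / Sipser hashing approximate-counting reduction to $\NP$) yields a $(1\pm\eta)$-approximation with $\mathrm{poly}(n,\log(1/\epsilon))$ overhead; hence $s'=\mathrm{poly}(n,\epsilon)$, and the accuracy loss is absorbed into the $\log(1/\epsilon)$ drop in $k'$ and the $8\sqrt\epsilon$ in $\epsilon'$. In case~(\ref{Counting_implies_Modulus1}), where only randomized circuits are available, I would estimate the count by sampling: draw $t$ uniform points from $\{0,1\}^n$, evaluate $D'$, and use the empirical fraction above the threshold as an estimate of $2^{-n}|\{x':D'(x',z)\geqslant t\}|$. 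Since the heavy slice has relative size $\approx 2^{k'-n}$, Chernoff bounds force the sample count $t=\mathcal{O}(2^{n-k'}\log(1/\epsilon)/\epsilon)$, which is exactly what produces the factor $2^{k-n-2}\epsilon/\log(1/\epsilon)$ in the stated $s'$; each of the $t$ samples costs one evaluation of $D'$, whence the multiplicative size blow-up. The delicate points I expect to spend care on are: handling the additive $\eta/4=\sqrt\epsilon/4$ slack in the threshold so that approximate (rather than exact) counting still certifies membership in the heavy slice without inflating the advantage loss beyond $8\sqrt\epsilon$; ensuring the two-sided counting error does not flip the sign of the advantage on the $(Y,Z)$ side; and verifying that passing from worst-case to the $\Hm{Y_z}\geqslant k'$ reference distribution (via Lemma~\ref{ModulusEntropy_main_lemma}) is compatible with quantifying over \emph{all} $(Y,Z)$ with $\Hm{Y|Z}\geqslant k$ in the metric-entropy contradiction, which is where the $k'=k-\log(1/\epsilon)$ gap is consumed.
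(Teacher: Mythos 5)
Your strategy coincides with the paper's own proof of Theorem~\ref{Counting_implies_Modulus}: argue by contradiction, invoke Lemma~\ref{ModulusEntropy_main_lemma} to extract $D'$, use the identity $\max_{Y_z:\,\Hm{Y_z}\geqslant k'}\mathbf{E}D'\left(Y_z,z\right)=\min\left(1,\,2^{-k'}\left|D'(\cdot,z)\right|\right)$ to turn the exceedance event into a counting condition on $\left|D'(\cdot,z)\right|$, approximate that count by uniform sampling plus Chernoff bounds in case (\ref{Counting_implies_Modulus1}) and by Stockmeyer-type $\NP$-oracle approximate counting in case (\ref{Counting_implies_Modulus2}), and let the resulting threshold circuit contradict the metric-entropy hypothesis. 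The construction of the final distinguisher and the origin of both size bounds match the paper. (One slip of the pen along the way: the reference expectation $\EvAdd{D'}{Y|Z=z}{x}{z}$ is \emph{maximized}, not minimized, by taking $Y|Z=z$ uniform over $\Max^{k'}_{D'(\cdot,z)}$, and the relevant event is exceedance of the average over the $2^{k'}$ \emph{largest} values of $D'(\cdot,z)$; your later use of $\Max^{k'}$ presumes the correct reading, so this is cosmetic.)

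The genuine problem is your parameter instantiation, which makes the final step fail. The negation of $\Habs{X|Z}{s'}{\epsilon'}\geqslant k'$ supplies a $D$ whose modulus advantage exceeds the \emph{full} $\epsilon'=8\sqrt{\epsilon}$ against every $(Y,Z)$ with $\Hm{Y|Z}\geqslant k'$, but you feed Lemma~\ref{ModulusEntropy_main_lemma} only $\eta=\epsilon'/8=\sqrt{\epsilon}$. The lemma then yields $\mathbf{P}_{(x,z)\leftarrow(X,Z)}\left[\,\cdot\,\right]\geqslant\eta^2/16=\epsilon/16$ on the $(X,Z)$ side, whereas on the reference side any $(Y,Z)$ with $\Hm{Y|Z}\geqslant k$ can place mass $2^{k'-k}=\epsilon$ on the heavy slice, and the two-sided counting error costs a further $\Theta(\epsilon)$. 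Your concluding inequality ``$\epsilon/16-\epsilon\geqslant\epsilon$ for small $\epsilon$'' is false for \emph{every} $\epsilon>0$: the left-hand side is negative, and since all terms scale linearly in $\epsilon$, taking $\epsilon$ small does not help. The argument only closes if you use the full advantage $\epsilon'$: then the lemma gives $\epsilon'^2/16=4\epsilon$ on the $(X,Z)$ side, which after subtracting one counting error $\epsilon'^2/64=\epsilon$ leaves $3\epsilon$, against at most $\epsilon+\epsilon=2\epsilon$ on the $Y$ side (mass of the slice, by Lemma~\ref{extreme_distribution_inequality} with $k=k'+\log(1/\epsilon)$, plus its counting error), for a net advantage of exactly $\epsilon'^2/64=\epsilon$. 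There is no slack to spare in this accounting — which is precisely why the theorem sets $\epsilon'=8\sqrt{\epsilon}$ rather than $\sqrt{\epsilon}$, and why your weakened invocation of the lemma cannot be repaired by adjusting constants elsewhere.
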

\noindent Note that to make the conversion in (a) efficient, we need the assumption that $k$ is large as it is easy to see that if $k$ is much smaller than $n$ then, in the formula that gives the bound on $s'$, the $2^{k-n-2}$ factor starts to dominate over $\epsilon$.
\begin{proof}[Proof of Theorem \ref{Counting_implies_Modulus}]
Suppose that $\Habs{X|Z}{s'}{\epsilon'} < k'$. Then Lemma \ref{ModulusEntropy_main_lemma} implies that for all $Y\in\{0,1\}^n$ with $\mathbf{H}_{\infty}(Y|Z)\geqslant k'$ and some distinguisher $D'$ of complexity $s'+1$ we have
\begin{equation}\label{niewiem}
  \mathbf{P}_{(x,z)\leftarrow (X,Z)} \left[ D'(x,z) - \mathbf{E}D'\left(Y|Z=z,z\right) \geqslant \epsilon'/4 \right] \geqslant \epsilon'^2/16.
\end{equation} 
Since 
\begin{equation}\label{extreme_disttribution_description}
 \max\limits_{Y_z:\,\Hm{Y_z}\geqslant k'}\mathbf{E}D\left(Y_z,z\right) = \min\left(1,\,2^{-k'}\left|D'(\cdot,z)\right| \right)
\end{equation}
hence, combining this with (\ref{niewiem}),  we obtain
\begin{equation}\label{extreme_distribution_inequality2}
  \mathbf{P}_{(x,z)\leftarrow (X,Z)} \left[  D'(x,z) - 2^{-k'}\left|D'\left(\cdot,z\right)\right| \geqslant \epsilon'/4 \right] \geqslant \epsilon'^2/16.
\end{equation} 
We now show that there exists a randomized function $h$ such that for every $z$ 
\begin{equation}\label{approximated_value}
 \mathbf{P}\left( \left| h(z) - 2^{-k'}\left|D'\left(\cdot,z\right)\right| \right| \leqslant \epsilon'/8 \right) \geqslant 1-\epsilon'^2/64,
\end{equation}
and $h(z)$ is samplable for all $z$'s satisfying $\left|D'\left(\cdot,z\right)\right| < 2^{k'}$.  More precisely: there  exists a randomized circuit of size $\mathcal{O}\left( s' \cdot \frac{2^{n-k}}{\epsilon^2}\log\frac{1}{\epsilon }\right)=s$, which computes $h(z)$ correctly for every such $z$.  This is a corollary from the following claim.
\begin{claim}\label{Chernoff}
Let $D$ be a boolean circuit such that $|D|\leqslant 2^{k}$. Then for $\delta',\delta'' \in \left(0, \frac{1}{2}\right)$, $\ell > 4\cdot 2^{n-k}\frac{1}{\delta'^2}\log\frac{1}{\delta''}$ and $\left(U_i\right)_{i=1,\ldots,\ell}$ being independent and uniform over $\{0,1\}^{n}$,  the following inequality holds:
\begin{equation*}
\mathbf{P}\left[ \left| \ell^{-1}\sum_{i=1}^{\ell}D\left(U_i\right) - 2^{-n}|D| \right| \geqslant 2^{k-n}\delta' \right] \leqslant  2 \delta''. 
\end{equation*}
\end{claim}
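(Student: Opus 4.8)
The plan is to read this as a pure concentration statement and dispatch it with an appropriate Chernoff-type bound. Writing $p := 2^{-n}|D| = \mathbf{E}_{u\leftarrow U_n} D(u)$, the random quantity $\ell^{-1}\sum_{i=1}^{\ell} D(U_i)$ is exactly the empirical mean of the $\ell$ independent Bernoulli variables $D(U_1),\ldots,D(U_\ell)$, each with mean $p$, so the claim asserts that this empirical mean concentrates around $p$ within additive error $2^{k-n}\delta'$, with failure probability at most $2\delta''$. The first step is to extract the two structural facts that make the stated sample complexity (linear in $2^{n-k}$, not $2^{2(n-k)}$) attainable: from the hypothesis $|D|\leqslant 2^k$ we get that the mean satisfies $p \leqslant 2^{k-n}$, and hence the per-sample variance $p(1-p)\leqslant p \leqslant 2^{k-n}$ is governed by this same small quantity.

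The core of the argument is to feed these facts into a \emph{variance-sensitive} tail bound, applied to $S := \sum_{i=1}^{\ell} D(U_i)$, which has mean $\ell p$ and total variance $\sigma^2 = \ell p(1-p) \leqslant \ell\cdot 2^{k-n}$. With deviation parameter $t := \ell\cdot 2^{k-n}\delta'$, Bernstein's inequality (each summand lies in $[0,1]$) gives
\begin{equation*}
  \mathbf{P}\left[ \left| S - \ell p \right| \geqslant t \right] \;\leqslant\; 2\exp\left( -\frac{t^2/2}{\sigma^2 + t/3} \right).
\end{equation*}
I would then substitute the two bounds above and use $\delta' < \tfrac{1}{2}$ to estimate $\sigma^2 + t/3 \leqslant \tfrac{7}{6}\,\ell\,2^{k-n}$, so that the exponent is at least $\tfrac{3}{7}\,\ell\,\delta'^2\,2^{k-n}$. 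One could equally invoke a multiplicative Chernoff bound, which in the extremal case $p = 2^{k-n}$ reduces to the same estimate.

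It remains to match the exponent against the hypothesis on $\ell$. Plugging $\ell > 4\cdot 2^{n-k}\delta'^{-2}\log\tfrac{1}{\delta''}$ into $\tfrac{3}{7}\ell\delta'^2 2^{k-n}$ makes the exponent exceed $\log\tfrac{1}{\delta''}$, and therefore also $\ln\tfrac{1}{\delta''}$, with the generous constant $4$ absorbing both the factor $\tfrac{7}{3}$ and the base change from $\log$ to $\ln$. This gives $\mathbf{P}[|S - \ell p|\geqslant t]\leqslant 2\delta''$, and dividing the event by $\ell$ recovers precisely the claimed inequality. The step I expect to require the most care is the very choice of concentration inequality: a Hoeffding bound using only the range $[0,1]$ of the summands would demand $\ell = \Omega\!\left(2^{2(n-k)}\delta'^{-2}\log\tfrac{1}{\delta''}\right)$, which is exponentially too large, so it is essential to exploit that the variance is controlled by the small mean $p\leqslant 2^{k-n}$ — this is exactly what produces the correct $2^{n-k}$ dependence in the hypothesis.
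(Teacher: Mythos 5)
Your proof is correct and takes essentially the same approach as the paper: both arguments hinge on the observation that the per-sample variance is controlled by the small mean $p \leqslant 2^{k-n}$ (rather than merely the range of the summands) and then apply a variance-sensitive Bernstein/Chernoff-type tail bound, matching the resulting exponent against the hypothesis on $\ell$. The only cosmetic difference is that you invoke Bernstein's inequality in its standard form, whereas the paper uses a two-regime Chernoff bound phrased in terms of $\lambda\sigma$ deviations; the arithmetic is equivalent.
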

\begin{proof}
Define $g=\frac{1}{\ell}\sum_{i=1}^{\ell}D\left(U_i\right)$. The Chernoff Inequality\footnote{We use the following version: Let $X_i$ be random variables satisfying $\left|X_i-\mathbf{E}X_i\right| \leqslant 1$ and $X=\sum_{i}X_i$. Then $\mathbf{P}\left[ \left|X-\mathbf{E}X\right| \geqslant \lambda\sigma \right]\leqslant 2\min\left( e^{-\frac{\lambda^2}{4}},\,e^{-\frac{\lambda\sigma}{2}} \right)$, where $\sigma = \mathrm{Var}(X)$} yields
\begin{equation*}
 \mathbf{P}\left[ \left| g-\mathbf{E}D(U) \right| \geqslant \delta \right] \leqslant 2\max\left(e^{-\frac{\delta^2 \ell^2}{4\sigma^2}},\,e^{-\frac{\ell\delta}{2}} \right),
\end{equation*}
where $\sigma^2 = \mathrm{Var}\left(\sum_{i=1}^{\ell}D\left(U_i\right)\right)$. Since $\mathrm{Var}\left( D\left(U_i\right) \right)=2^{k-n}\left(1-2^{k-n}\right)$ we have $\sigma^2 =  \ell\cdot 2^{k-n}\left(1-2^{k-n}\right)$. By setting $ 2^{n-k}\delta = \delta'$ we get $\frac{\delta^2 \ell^2}{4\sigma^2} \geqslant \frac{2^{k-n}\ell\delta'^2}{4}$ and $\frac{\ell\delta}{2} \geqslant \frac{2^{k-n}\ell\delta'}{2}$. Since $\mathbf{E}D(U) = |D|/2^{n}$, choosing $\ell$ sufficiently large so that $2^{k-n}\ell\delta'^2 > 4\log(1/\delta'')$, we obtain $
 \mathbf{P}\left[ \left| g\cdot 2^{n-k} - |D|\cdot 2^{-k} \right| \geqslant \delta' \right] \leqslant 2 e^{-\log\frac{1}{\delta''}} < 2 \delta''$.
\end{proof}
\noindent It follows from the claim that $h(z)=\frac{2^{k-n}}{\ell}\sum_{i=1}^{\ell}D\left(U_i,z\right)$ is a required sampler.
Consider the following distinguisher $D''$: on input $(x,z)$, which comes either from $(X,Z)$ or $(Y,Z)$, return $1$ iff 
$D'(x,z) > h(z) + \frac{\epsilon'}{8}$. We will prove that $D''$ distinguishes between $(X,Z)$ and all $(Y,Z)$ satisfying $\Hm{Y|Z}\geqslant k$. Note that if $D''(x,z)=1$ then $h(z)<1-\frac{\epsilon'}{8}$ and hence $\left|D'\left(\cdot,z\right)\right| < 2^{k'}$. Especially, $D''$ is of complexity at most $s$. Now, inequalities (\ref{approximated_value}) and (\ref{extreme_distribution_inequality2}) yield
\begin{multline*}
 \mathbf{P}_{(x,z)\leftarrow (X,Z)}\left[ D'(x,z) > h(z) + \epsilon'/8 \right] \geqslant \\
\mathbf{P}_{(x,z)\leftarrow (X,Z)}\left[ D'(x,z) > 2^{-k'}\left|D'\left(\cdot,z\right)\right| + \epsilon'/4 \right] - \epsilon'^2/64 \geqslant {3\epsilon'^2}/{64},
\end{multline*}
Let $k' = k+\log(\frac{1}{\delta}$ where $\delta = \frac{\epsilon'^2}{64}$. From (\ref{extreme_disttribution_description}), (\ref{extreme_distribution_inequality2}), (\ref{approximated_value}) and Lemma \ref{extreme_distribution_inequality}, we obtain 
\begin{multline*}
  \mathbf{P}_{(x,z)\leftarrow (Y,Z)}\left[ D'(x,z) > h(z) + {\epsilon'}/{8} \right] \leqslant  \\
\mathbf{P}_{(x,z)\leftarrow (Y,Z)}\left[ D'(x,z) > 2^{-k'}{\left|D'\left(\cdot,z\right)\right|} \right] + {\epsilon'^2}/{64} \leqslant {\epsilon'^2}/{32}.
\end{multline*}
Combining the last two estimates yields, if only $\mathbf{H}_{\infty}(Y|Z)\geqslant k'$, the inequality 
\begin{equation*}
  \mathbf{P}\left[D'(X,Z) = 1\right] - \mathbf{P}\left[D'(Y,Z) = 1\right] \geqslant \epsilon'^2/64
\end{equation*}
which completes the proof for case (\ref{Counting_implies_Modulus1}). In case (\ref{Counting_implies_Modulus2}), we proceed in the same way but we compute numbers $h(z)$ using an $\NP$ oracle. The basic result we use can be stated as follows: 
\begin{lemma}{\cite{Donnell2009}}
There is a probabilistic algorithm which, given a boolean circuit $D$ over $\{0,1\}^{n}$ of size $\mathrm{poly}(n)$ and a natural number $M$, decides, with success probability at least $\frac{3}{4}$, whether $\frac{1}{4} M < |D| < 4 M$, in time $\mathrm{poly}\left(n\right)$, using an oracle for $\NP$.
\end{lemma}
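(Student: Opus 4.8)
This is the classical approximate-counting theorem of Sipser and Stockmeyer, so the plan is to reconstruct it through hashing, the point being that one layer of randomness lets us stay at the $\NP$ level rather than at $\Sigma_2$. Recall that $|D|=\sum_{x\in\{0,1\}^n}D(x)$ is the number of accepting inputs of $D$. The only combinatorial tool I would use is a family of pairwise independent hash functions: for each $j\in\{0,\ldots,n\}$ let $\mathcal{H}_j$ be a set of maps $h\colon\{0,1\}^n\to\{0,1\}^j$, each described by $\mathrm{poly}(n)$ bits and evaluable in time $\mathrm{poly}(n)$, such that for a random $h$ and any fixed $x$ the value $h(x)$ is uniform, while for distinct $x,x'$ the values $h(x),h(x')$ are independent (e.g.\ affine maps over $\mathbb{F}_2$).

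First I would reduce the test to a single oracle call. For fixed $h\in\mathcal{H}_j$ consider the predicate ``there exists $x$ with $D(x)=1$ and $h(x)=0^j$''. This is an $\NP$ predicate in the pair $(D,h)$: the witness is $x$, and verifying $D(x)=1$ together with $h(x)=0^j$ costs $\mathrm{poly}(n)$. Thus a single query to the $\NP$ oracle answers it. This is the conceptual crux --- the random choice of $h$ plays the role that an inner universal quantifier would otherwise play, keeping the whole procedure at the first level of the hierarchy.

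The heart of the argument is a second-moment estimate. Writing $N_h=|\{x:D(x)=1,\ h(x)=0^j\}|$, pairwise independence gives $\mathbf{E}\,N_h=|D|\cdot 2^{-j}$ and $\mathrm{Var}\,N_h\leqslant\mathbf{E}\,N_h$. Hence, if $|D|\geqslant 4\cdot 2^{j}$ then $\mathbf{E}\,N_h\geqslant 4$ and Chebyshev gives $\mathbf{P}[N_h=0]\leqslant 1/\mathbf{E}\,N_h\leqslant 1/4$, so the oracle answers ``yes'' with probability at least $3/4$; and if $|D|\leqslant\tfrac14\cdot 2^{j}$, then Markov gives $\mathbf{P}[N_h\geqslant 1]\leqslant\mathbf{E}\,N_h\leqslant 1/4$. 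Thus, at threshold $2^{j}$, a single randomized query separates $|D|\geqslant 4\cdot 2^{j}$ from $|D|\leqslant\tfrac14\cdot 2^{j}$ with a constant confidence gap.

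It remains to turn the threshold test into the stated factor-$4$ verdict, and this is where the constant bookkeeping --- the main obstacle --- lives. Running the test of the previous paragraph at the threshold $2^{j}$ closest to $M$ already pins down $|D|$ up to the factor-$16$ gap between its ``large'' and ``small'' regimes; to shrink this to the factor-$4$ window of the lemma I would apply the same test to the tensored circuit $D^{\otimes t}$ on $(\{0,1\}^n)^t$ that accepts $(x_1,\ldots,x_t)$ iff every $D(x_i)=1$, so that $|D^{\otimes t}|=|D|^t$ and the multiplicative gap on $|D|$ contracts to $16^{1/t}$; a constant $t$ suffices. The residual ambiguity at the two endpoints $\tfrac14 M$ and $4M$ is harmless: it is confined to a thin multiplicative sliver and is absorbed into the $3/4$ success-probability budget, which I would secure by a majority vote over $\mathcal{O}(1)$ independent hashes. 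Since $D^{\otimes t}$ has size $\mathrm{poly}(n)$ for constant $t$, each hash and each oracle query costs $\mathrm{poly}(n)$, and only $\mathcal{O}(1)$ oracle calls are made, the whole procedure meets the stated $\mathrm{poly}(n)$ and $\NP$-oracle bounds.
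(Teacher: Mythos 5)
The paper never proves this lemma: it is imported as a black box from the cited source \cite{Donnell2009} (this is the classical Sipser--Stockmeyer approximate-counting theorem), so there is no internal argument to compare yours against, and your reconstruction has to stand on its own. Judged on its own it is essentially correct and is the canonical proof: the predicate ``$\exists x\,(D(x)=1\wedge h(x)=0^{j})$'' is indeed an $\NP$ query about the pair $(D,h)$, pairwise independence gives $\mathbf{E}\,N_h=2^{-j}|D|$ and $\mathrm{Var}\,N_h\leqslant\mathbf{E}\,N_h$, and the Chebyshev/Markov pair turns a single oracle call into a constant-gap threshold test. A nice point of contact: the tensoring step $D^{\otimes t}$ you use to shrink the factor-$16$ gap is exactly the amplification the paper itself performs right after invoking the lemma (it forms $D_1\wedge\cdots\wedge D_k$ to improve the factor $4$ to $1+\gamma$), so you have folded one of the paper's post-processing remarks into the proof proper. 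Two pieces of bookkeeping you gloss over should be made explicit. First, the lemma cannot be read as literally deciding the sharp predicate $\frac{1}{4}M<|D|<4M$: an algorithm doing that would, by binary search over $M$, recover $|D|$ exactly, i.e.\ perform exact counting in randomized polynomial time with an $\NP$ oracle, which is not believed possible; your promise/gap reading, with unavoidable ambiguity in thin multiplicative slivers at the two endpoints, is the only tenable one, and it matches how the paper actually uses the lemma (to extract an estimate $M(z)$ up to a multiplicative factor). Second, your final verdict combines two one-sided threshold tests, one placed near $4M$ and one near $M/4$; two tests that each err with probability $1/4$ give only $1/2$ by a union bound, so the $\mathcal{O}(1)$-fold repetition with majority vote is needed precisely to make this combination sound (driving each test's error below $1/8$, say) --- not merely to absorb the endpoint slivers, as your write-up suggests. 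With those two clarifications, the proof is complete and matches the strength of the cited statement.
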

\noindent Let us make three important observations: 
\begin{itemize}
 \item The success probability $\frac{3}{4}$ can be amplified to $1-\delta$, by repeating the algorithm $\mathcal{O}\left(\log\frac{1}{\delta}\right)$ times and taking the majority answer.
 \item The factor $4$ can be improved to $1+\gamma$, by running the algorithm on the circuit $D' = D_1\wedge\ldots\wedge D_k$, where $D_i$ for $i=1,\ldots,k$ are copies of $D$ and $k$ is such that $(1+\gamma)^{k}\leqslant 4$.
\end{itemize}
Hence, there is an algorithm which, with probability at least $1-\delta$, computes a value $g$ such that
$ (1-\gamma) M < |D| < (1+\gamma) M$, in time $\mathrm{poly}\left(n,\frac{1}{\gamma},\log\frac{1}{\delta}\right)$, using an oracle for $\NP$. For every $z$, let $M(z)$ be a value obtained by applying this algorithm to the circuit $D'(\cdot,z)$ and  $\gamma = \frac{\epsilon'}{16}$, $\delta = 1-\frac{\epsilon'^2}{64}$. Define $h(z) := 2^{-k}M(z)$. If $\left|D'(\cdot,z)\right| < 2^{k'}$, then $\left|M(z)-\left|D'(\cdot,z)\right|\right| \leqslant 2\cdot 2^{k'}\cdot\frac{\epsilon'}{16}$ holds with probability at least $1-\frac{\epsilon'^2}{64}$, and thus for such values $z$ holds the same estimate as in (\ref{approximated_value}). We proceed further with $h$ as in the previous proof. 
\end{proof}

\subsection{Squared Indistinguishability}

\begin{theorem}\label{SquaredInd_implies_Modulus}
We say that $X$ is $(s,\epsilon)$ squared-indistinguishable from $Y$ given $Z$, if for every circuit $D$ of size $s$, $\mathbf{E}_{z\leftarrow } \left[ \mathbf{E} D(X|Z=z,z) - \mathbf{E}D(Y|Z=z,z) \right]^2 \leqslant \epsilon$ (motivated by \cite{Dodis2013}). Suppose that $X|Z$ is $(s,\epsilon)$ squared-indistinguishable from $Y$ given $Z$, and $\HmAv{Y|Z}\geqslant k$. Then $\Habs{X|Z}{s}{\sqrt{\epsilon}} \geqslant k$.
\end{theorem}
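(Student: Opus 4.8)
The plan is to exploit the single observation that squared-indistinguishability controls the \emph{second} moment of the per-slice advantages, whereas the modulus definition only asks for control of their \emph{first} moment, and that these two are related by one application of Jensen's inequality (equivalently, the power-mean or Cauchy--Schwarz inequality). No entropy and no circuit size need be spent, so the conversion should be lossless except for the passage $\epsilon \mapsto \sqrt{\epsilon}$.

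Concretely, I would first fix an arbitrary deterministic boolean distinguisher $D$ of size $s$ and take $Y$ to be the \emph{single} distribution furnished by the squared-indistinguishability hypothesis; by assumption it satisfies $\HmAv{Y|Z}\geqslant k$ and works simultaneously against every circuit of size $s$. For each $z$ I set $\epsilon_D(z) = \mathbf{E}_{x\leftarrow X|Z=z}D(x,z) - \mathbf{E}_{x\leftarrow Y|Z=z}D(x,z)$, so that the hypothesis reads $\mathbf{E}_{z\leftarrow Z}\,\epsilon_D(z)^2 \leqslant \epsilon$. The key (and essentially only) step is the inequality $\mathbf{E}_{z\leftarrow Z}\left|\epsilon_D(z)\right| \leqslant \sqrt{\mathbf{E}_{z\leftarrow Z}\,\epsilon_D(z)^2}$, which is Jensen's inequality for the concave map $t\mapsto\sqrt{t}$ (or Cauchy--Schwarz against the constant function $1$). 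Chaining it with the hypothesis gives $\mathbf{E}_{z\leftarrow Z}\left|\epsilon_D(z)\right|\leqslant\sqrt{\epsilon}$, which is exactly inequality (\ref{modulus_inequality}) with advantage $\sqrt{\epsilon}$ and target distribution $Y$.

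Since $D$ was arbitrary and the very same $Y$ serves every distinguisher, this yields the desired modulus entropy bound (the average-case $\HabsAv$, matching the hypothesis $\HmAv{Y|Z}\geqslant k$ and row (f) of Table \ref{table:1}). I do not expect any genuine obstacle: the whole content is the $L^2 \to L^1$ passage, and the quantifier structure is favourable, since squared-indistinguishability supplies \emph{one} $Y$ good against all $D$, which is strictly stronger than what the modulus definition requires (for each $D$ \emph{some} $Y$). The only point worth stating with care is precisely this quantifier comparison, together with the remark that the min-entropy $k$ and size $s$ are untouched because the target distribution is reused verbatim.
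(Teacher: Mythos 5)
Your proposal is correct and is essentially identical to the paper's own proof, which consists of exactly the same single step: applying the first-moment/second-moment inequality (Jensen or Cauchy--Schwarz) to the per-slice advantages $\epsilon_D(z)$ to pass from $\mathbf{E}_{z\leftarrow Z}\,\epsilon_D(z)^2 \leqslant \epsilon$ to $\mathbf{E}_{z\leftarrow Z}\left|\epsilon_D(z)\right| \leqslant \sqrt{\epsilon}$, reusing the same $Y$ with $\HmAv{Y|Z}\geqslant k$ verbatim. Your additional remark about the favourable quantifier structure (one $Y$ for all $D$ versus some $Y$ for each $D$) is left implicit in the paper but is exactly the right observation.
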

\begin{proof}
From the inequality between the first and the second moment we obtain:
\begin{multline}
  \mathbf{E}_{z\leftarrow Z}\left| \mathbf{E} D(X|Z=z,z) - \mathbf{E}D(Y|Z=z,z) \right| \leqslant \\ \left( \mathbf{E}_{z\leftarrow Z} \left[ \mathbf{E} D(X|Z=z,z) - \mathbf{E}D(Y|Z=z,z) \right]^2 \right)^{\frac{1}{2}} \leqslant \sqrt{\epsilon}.
\end{multline}
\end{proof}

\section{Acknowledgments}
I would like to express special thanks to Stefan Dziembowski and Krzysztof Pietrzak, for their helpful suggestions and discussions.

\bibliographystyle{amsalpha}
\bibliography{./ComputationalEntropy}

\appendix

\section{Tightness of the Leakage Lemma}\label{app:LeakageLemma_Thightness}

\begin{lemma}\label{ZeroComputationalEntropy}
Let $X\in\{0,1\}^{n}$ be a random variable, $f:\,\{0,1\}^{m}\rightarrow \{0,1\}^{n}$ be a deterministic circuit of size $s$ and $ \epsilon < \frac{1}{12}$.
Then $\HmtrAv{f(X)|X}{\textup{det}\{0,1\},s}{\epsilon} < 3 $.
\end{lemma}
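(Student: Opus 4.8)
The plan is to prove the statement by contraposition against the ``for every $D$, there exists $Y$'' form of Definition \ref{MetricEntropy_Definition}: it suffices to exhibit \emph{one} deterministic boolean distinguisher that separates $(f(X),X)$ from \emph{every} admissible $(Y,X)$ with $\HmAv{Y|X}\geqslant 3$ by advantage strictly exceeding $\epsilon$. The guiding intuition is that, information-theoretically, $f(X)$ given $X$ carries zero min-entropy (it is a deterministic function of the conditioning variable), so the only way its metric entropy could be large is if bounded distinguishers were unable to recompute $f$; since a size-$s$ circuit for $f$ exists, they can.

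Concretely, I would take the equality-test distinguisher $D(a,x)=1$ iff $a=f(x)$, which computes $f(x)$ and compares it with $a$. First I would note that on the real distribution $\mathbf{E}\,D(f(X),X)=1$, since $a=f(x)$ always holds. Next, for an arbitrary $(Y,X)$ I would bound
\begin{equation*}
\mathbf{E}\,D(Y,X)=\mathbf{E}_{x\leftarrow X}\,\mathbf{P}_{Y|X=x}\!\left(f(x)\right)\leqslant \mathbf{E}_{x\leftarrow X}\max_{a}\mathbf{P}_{Y|X=x}(a)\leqslant 2^{-3}=\tfrac18,
\end{equation*}
where the final step is precisely the hypothesis $\HmAv{Y|X}\geqslant 3$. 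Combining the two, $D$ achieves advantage at least $1-\tfrac18=\tfrac78$ against every such $Y$, and since $\epsilon<\tfrac1{12}<\tfrac78$ this rules out the $k=3$ instance of the metric-entropy condition, yielding $\HmtrAv{f(X)|X}{\textup{det}\{0,1\},s}{\epsilon}<3$.

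The argument is short, and the only points requiring care are bookkeeping ones. The first is circuit size: the distinguisher is $f$'s circuit followed by an $\mathcal{O}(n)$-gate equality test, so it sits in $(\textup{det}\{0,1\},s)$ only after absorbing this additive comparison cost into $s$, as is standard. The second, which I regard as the main conceptual point, is that the bound on $\mathbf{E}\,D(Y,X)$ must go through the \emph{average} maximal conditional probability $\mathbf{E}_x\max_a \mathbf{P}_{Y|X=x}(a)$ rather than a worst-case bound, so that a single fixed $D$ defeats \emph{all} high-entropy $Y$ simultaneously; once this is observed, the universal quantifier over $(Y,X)$ costs nothing. I would also remark that the constant $3$ is deliberately loose: the same $D$ already defeats every $k$ with $1-2^{-k}>\epsilon$, so the true metric entropy here is below $\log\frac{1}{1-\epsilon}$.
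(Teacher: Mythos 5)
Your proof is correct and, although it uses the same equality-test distinguisher as the paper ($D(y,x)=1$ iff $y=f(x)$, of size $s$ plus an $\mathcal{O}(n)$ comparison --- a bookkeeping slack the paper shares), the analysis of its advantage takes a genuinely different and in fact tighter route. The paper detours through the worst case: it applies Lemma~\ref{Average_WorstCase_Conversion} with $\delta=\frac{1}{3}$ to get $\Hm{Y|X=x}\geqslant 3-\log_2 3$ with probability $\frac{2}{3}$ over $x\leftarrow X$, bounds the conditional acceptance probability by $\frac{3}{8}$ on that good event, handles the bad event conservatively, and ends with advantage at least $\frac{2}{3}\cdot\frac{5}{8}-\frac{1}{3}=\frac{1}{12}$ --- which is exactly where the hypothesis $\epsilon<\frac{1}{12}$ in the statement comes from. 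You instead observe that $\mathbf{E}\,D(Y,X)=\mathbf{E}_{x\leftarrow X}\mathbf{P}_{Y|X=x}\left(f(x)\right)\leqslant\mathbf{E}_{x\leftarrow X}\max_{a}\mathbf{P}_{Y|X=x}(a)\leqslant 2^{-3}$, where the last inequality \emph{is} the definition of $\HmAv{Y|X}\geqslant 3$; no average-to-worst-case conversion, no case split, and the advantage improves from $\frac{1}{12}$ to $\frac{7}{8}$. Your route is strictly better here: it is shorter, it works for any $\epsilon<\frac{7}{8}$ rather than only $\epsilon<\frac{1}{12}$, and it makes visible --- as your closing remark about $\log\frac{1}{1-\epsilon}$ correctly quantifies --- that both constants in the statement are far from tight. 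The paper's approach buys nothing in this instance beyond reusing its generic conversion lemma, at the price of the lossy constant that then has to appear in the hypothesis.
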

\begin{proof}
Consider the following distinguisher $D$: on the input $(y,x)$, where $x\in\{0,1\}^{m}$ and $y\in\{0,1\}^{n}$, run $f(x)$ and return $1$ iff $f(x)=y$. Then for every $x$ we get $D(f(x),x) = 1$. Let $Y$ be any random variable over $\{0,1\}^{n}$ such that $\widetilde{\mathbf{H}}_{\infty}(Y|X) \geqslant 3$. Then by Lemma \ref{Average_WorstCase_Conversion}, with probability $\frac{2}{3}$ over $x\leftarrow X$ we have
$ \mathbf{H}_{\infty}(Y|X=x) \geqslant 3-\log_{2}(3)$. Since $D(y,x) = 0$ if $y\not =x$, for any such $x$ we have
$\EvAdd{D}{Y|X=x}{y}{x} \leqslant  2^{-(3 - \log_2(3))} \leqslant \frac{3}{8}$, and thus, with probability $\frac{2}{3}$ over $x\leftarrow X$, we get $\EvAdd{D}{f(X)|X=x}{y}{x} - \EvAdd{D}{Y|X=x}{y}{x} \geqslant \frac{5}{8}$.
Taking the expectation over $x\leftarrow X$ we obtain finally $\mathbf{E}D(  f(X), X )- \mathbf{E}D ( Y,X ) \geqslant \frac{2}{3}\cdot \frac{5}{8} - \frac{1}{3} \cdot 1 = \frac{1}{12}$.
\end{proof}
\noindent We use this lemma to show that the esimate in Lemma \ref{LeakageLemma_Computational} cannot be improved:
\begin{theorem}[Tightness of the estimate in Lemma \ref{LeakageLemma_Computational}]\label{LeakageLemma_Computational_OptimalLoss}
Suppose that there exists an exponentially secure pseudorandom generator $f$. 
Then for every $m$ and $C>0$ we have $ \Hhll{f\left(U_m\right)}{\textup{rand}\{0,1\}, 2^{\bigO{m}}}{\frac{1}{2^{\bigO{m}}}} \geqslant m+C$ and simultaneously $ \HmtrAv{\left.f\left(U_m\right)\right|U_m}{\textup{det}\{0,1\}, \mathrm{poly}(m)}{\frac{1}{\mathrm{poly}(m)}} \leqslant 3$.
\end{theorem}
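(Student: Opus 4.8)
The plan is to establish the two inequalities separately and then observe that together they exhibit the simultaneous loss of roughly $m$ bits in quantity and of an exponential factor in security, exactly as predicted by Lemma \ref{LeakageLemma_Computational}. Almost all of the work is already packaged into Lemma \ref{ZeroComputationalEntropy}; what remains is to instantiate it correctly and to match the parameters of the pseudorandom generator.

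First I would handle the unconditional HILL bound. I would take $f$ to be an exponentially secure pseudorandom generator with output length $m+C$ — such a generator is obtained from any exponentially secure one by the standard stretching construction, which preserves $\left(2^{\bigO{m}},\,1/2^{\bigO{m}}\right)$-security up to polynomial factors in the (here constant) number of output blocks. Using $Y = U_{m+C}$ as the HILL witness, one has $\Hm{U_{m+C}} = m+C$, and the security of the generator says precisely that $f\left(U_m\right)$ is $\left(\textup{rand}\{0,1\},2^{\bigO{m}}\right)$-indistinguishable from $U_{m+C}$ with advantage $1/2^{\bigO{m}}$. This is exactly the unconditional HILL condition, so $\Hhll{f\left(U_m\right)}{\textup{rand}\{0,1\},2^{\bigO{m}}}{1/2^{\bigO{m}}} \geqslant m+C$.

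Next I would handle the conditional metric bound, which is a direct application of Lemma \ref{ZeroComputationalEntropy} with $X := U_m$ (the random variable there is to be read as living in $\{0,1\}^m$, the domain of $f$, as is clear from the proof of that lemma). The distinguisher $D(y,x) = \mathbf{1}\left[f(x)=y\right]$ has size dominated by the circuit computing $f$, hence $\mathrm{poly}(m)$, and it predicts $f(x)$ from $x$ perfectly, so no $Y$ with $\HmAv{Y|U_m}\geqslant 3$ can fool it by more than a constant; the quantitative gap follows by converting average to worst-case min-entropy through Lemma \ref{Average_WorstCase_Conversion}. Taking the error to be any $1/\mathrm{poly}(m)$ with $\mathrm{poly}(m)>12$ keeps us inside the regime $\epsilon < \tfrac{1}{12}$ of that lemma and gives $\HmtrAv{\left.f\left(U_m\right)\right|U_m}{\textup{det}\{0,1\},\mathrm{poly}(m)}{1/\mathrm{poly}(m)} < 3$.

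I do not expect a genuine obstacle here; the only care needed is bookkeeping — securing sufficient stretch for the first part, and verifying that the polynomial circuit-size and polynomial-error regime of the second part indeed sits inside the constant-error regime of Lemma \ref{ZeroComputationalEntropy}. The conceptual payoff, that the $-m$ entropy loss and the $2^{m}$ security loss of Lemma \ref{LeakageLemma_Computational} appear together, is then read off by comparing the two inequalities: $f\left(U_m\right)$ retains $m+C$ bits of HILL entropy against exponential-size adversaries, whereas $f\left(U_m\right)\,|\,U_m$ already has metric entropy below $3$ against merely polynomial-size ones.
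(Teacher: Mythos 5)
Your proof is correct and follows essentially the same route as the paper: the first inequality is read off directly from the definition of an exponentially secure pseudorandom generator (with the uniform distribution as the HILL witness), and the second is an instantiation of Lemma~\ref{ZeroComputationalEntropy} with $X = U_m$. Your additional bookkeeping (stretching the generator to output length $m+C$ and checking that $1/\mathrm{poly}(m) < 1/12$) merely makes explicit details that the paper's two-sentence proof leaves implicit.
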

\begin{proof}
The first inequality follows from the definition of the exponentially secure pseudorandom generator.
The second inequality is implied by Lemma \ref{ZeroComputationalEntropy}.
\end{proof}

\section{Metric Entropy vs Different Kinds of Distinugishers}\label{app:TechnicalResults}

\noindent Below we prove the equivalence between boolean and real valued distinguishers

\begin{theorem}\label{nonaverage_passing_to_realvalued}
For any random variables $X,Z$ over $\{0,1\}^{n},\{0,1\}^{m}$ we have $
 \mathbf{H}_{}^{\textup{Metric},\textup{det}[0,1], s',\epsilon}(X|Z) = \mathbf{H}_{}^{\textup{Metric},\textup{det}\{0,1\}, s,\epsilon}(X|Z) 
$, where $s'\approx s$.
\end{theorem}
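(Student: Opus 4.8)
The plan is to prove the two inequalities separately. Since $\textup{det}\{0,1\}\subseteq\textup{det}[0,1]$, every boolean circuit is a real-valued circuit of the same size, so indistinguishability against the wider class $\textup{det}[0,1]$ is the stronger requirement; this gives the trivial inequality $\Hmtr{X|Z}{\textup{det}[0,1],s}{\epsilon}\leqslant\Hmtr{X|Z}{\textup{det}\{0,1\},s}{\epsilon}$. The entire content of the theorem is the reverse direction: a real-valued distinguisher must be replaceable by a boolean one of essentially the same size and at least the same distinguishing power. To organize this I would work with the one-sided gap
\[
\mathrm{adv}(D)=\mathbf{E}_{(x,z)\leftarrow(X,Z)}D(x,z)-\max_{Y:\,\Hm{Y|Z}\geqslant k}\mathbf{E}_{(x,z)\leftarrow(Y,Z)}D(x,z).
\]
Because the set of $Y$ with $\Hm{Y|Z}\geqslant k$ is convex, the achievable values $\mathbf{E}D(Y,Z)$ form an interval, so $\Hmtr{X|Z}{\cD}{\epsilon}<k$ is equivalent to the existence of $D\in\cD$ with $\mathrm{adv}(D)>\epsilon$ after possibly passing to the complement $D^c=\mathbf{1}-D$ (which captures the opposite one-sided gap). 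As both classes are closed under complementation, it suffices to show that for every real-valued $D$ there is a threshold $t$ with $\mathrm{adv}(D_t)\geqslant\mathrm{adv}(D)$, where $D_t(x,z)=\mathbf{1}[D(x,z)\geqslant t]$ is boolean.

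The heart of the argument is a layer-cake decomposition. From $D(x,z)=\int_0^1 D_t(x,z)\,dt$ we get immediately $\mathbf{E}D(X,Z)=\int_0^1\mathbf{E}D_t(X,Z)\,dt$. For the subtracted term I would use the structure of the worst-case maximizer: for each fixed $z$ the optimal $Y|Z=z$ is uniform over $\Max^k_{D(\cdot,z)}$, so that $\max_Y\mathbf{E}D(Y,Z)=\mathbf{E}_{z\leftarrow Z}\bigl[2^{-k}\sum_{x\in\Max^k_{D(\cdot,z)}}D(x,z)\bigr]$. Applying the layer-cake representation inside the sum, and using that the superlevel set $\{x:D(x,z)\geqslant t\}$ is nested and aligned with the top-$2^k$ set $\Max^k_{D(\cdot,z)}$, one obtains $2^{-k}\sum_{x\in\Max^k_{D(\cdot,z)}}D(x,z)=\int_0^1\min(1,2^{-k}|D_t(\cdot,z)|)\,dt$, where $\min(1,2^{-k}|D_t(\cdot,z)|)$ is precisely $\max_{Y_z}\mathbf{E}D_t(Y_z,z)$ for the boolean $D_t$ (cf.\ (\ref{extreme_disttribution_description})). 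Integrating over $z$ and exchanging expectation with the integral yields the \emph{exact} decomposition $\max_Y\mathbf{E}D(Y,Z)=\int_0^1\max_Y\mathbf{E}D_t(Y,Z)\,dt$, and therefore $\mathrm{adv}(D)=\int_0^1\mathrm{adv}(D_t)\,dt$.

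With this identity the theorem follows by averaging: since $\mathrm{adv}(D)$ is the mean of $\mathrm{adv}(D_t)$ over $t\in[0,1]$ and $D$ takes only finitely many values (so there are finitely many distinct $D_t$), some threshold $t^\ast$ satisfies $\mathrm{adv}(D_{t^\ast})\geqslant\mathrm{adv}(D)>\epsilon$. The circuit $D_{t^\ast}$ merely evaluates $D$ and compares its output against the constant $t^\ast$, so it has size $s+\mathcal{O}(1)$; it witnesses $\Hmtr{X|Z}{\textup{det}\{0,1\},s+\mathcal{O}(1)}{\epsilon}<k$ whenever $\Hmtr{X|Z}{\textup{det}[0,1],s}{\epsilon}<k$, which is exactly the reverse inequality up to $s'\approx s$.

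I expect the main obstacle to be establishing that the subtracted maximum decomposes as an \emph{equality} rather than merely by the subadditive bound $\max_Y\int_0^1(\cdots)\,dt\leqslant\int_0^1\max_Y(\cdots)\,dt$; this equality is what lets a single threshold suffice, and it rests precisely on the observation that, for fixed $z$, both the top-$2^k$ selection and every superlevel set of $D(\cdot,z)$ are prefixes of the same ordering of $\{0,1\}^n$ by decreasing value of $D(\cdot,z)$. Secondary care is needed to treat the absolute value in the definition of metric indistinguishability uniformly (via $D^c$) and to confirm that the finitely-many-thresholds averaging step is legitimate.
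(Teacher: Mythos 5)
Your proposal is correct and follows essentially the same route as the paper: pass to a one-sided gap via $D^c$, apply the layer-cake/threshold decomposition, and exploit the fact that every superlevel set of $D(\cdot,z)$ and the set $\Max^k_{D(\cdot,z)}$ are prefixes of the same ordering by decreasing value, so an averaging step yields a single threshold $t^\ast$ and a boolean distinguisher of size $s+\mathcal{O}(1)$. The only difference is packaging: the paper fixes the maximizing distribution $Y^{+}$ for the real-valued distinguisher, finds $t$ by comparing integrated tail probabilities, and then separately checks that $Y^{+}$ still maximizes the thresholded circuit, whereas you absorb that last step into the exact identity $\mathrm{adv}(D)=\int_0^1\mathrm{adv}(D_t)\,\mathrm{d}t$.
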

\begin{proof}
We only need to prove $\Hmtr{X|Z}{\textup{det}[0,1], s'}{\epsilon} \geqslant \mathbf{H}_{\infty}^{\textup{Metric},\textup{det}\{0,1\}, s,\epsilon}$ as the other direction is trivial (because the class $(\textup{det}[0,1],s)$ is larger than $(\textup{det}\{0,1\},s)$). Suppose that $\Hmtr{X|Z}{\textup{det}[0,1],s}{\epsilon} <k$. Then for some $D$ and all $Y$ satisfying $\Hm{X|Z}\geqslant k$ we have $ \left| \mathbf{E}_{(x,z)\leftarrow (X,Z)}D(x,z) - \mathbf{E}_{(x,z)\leftarrow (Y,Z)}D(x,z) \right| \geqslant \epsilon$.
Applying the same reasoning as in Thm. \ref{Metric_implies_Modulus} we can replace  $D$ with $D'$, which is equal either to $D$ or to $D^{c}$, obtaining for all distributions $\Hm{Y|Z}\geqslant k$, the following:
\begin{equation*}
 \mathbf{E}D'(X,Z) - \mathbf{E}D'(Y,Z) \geqslant \epsilon.
\end{equation*}
Consider the distribution $\left(Y^{+},Z\right)$ minimizing the left side of the above inequality. Equivalently, it maximizes the expected value of $D'$ under the condition $\Hm{Y|Z}\geqslant k$. Since this condition means that $\Hm{\left.Y^{+}\right|Z=z}\geqslant k$ for all $z$, we conclude that $\left.Y^{+}\right|Z=z$, for fixed $z$, is distributed over $2^k$ values of $x$ giving the greatest values of $D'(x,z)$. Calculating the expected values in the last inequality via integration of the tail yields
\begin{equation*}
 \int\limits_{t\in [0,1]}\mathbf{P}_{(x,z)\leftarrow (X,Z)}\left[D(x,z) > t\right] \mbox{d}t - \int\limits_{t\in [0,1]}\mathbf{P}_{(x,z)\leftarrow \left(Y^{+},Z\right)} \left[D(x,z) > t\right]\mbox{d}t \geqslant \epsilon
\end{equation*}
therefore for some number $t\in(0,1)$, the following holds:
\begin{equation*}
 \mathbf{P}_{(x,z)\leftarrow (X,Z)}\left[D(x,z) > t\right]  \geqslant \mathbf{P}_{(x,z)\leftarrow \left(Y^{+},Z\right)} \left[D(x,z) > t\right] + \epsilon.
\end{equation*}
Let $D''$ be a $\{0,1\}$-distinguisher that for every $(x,z)$ outputs $1$ iff $D(x,z) > t$.  Clearly $D''$ is of size $s+\mathcal{O}(1)$ and satisfies
\begin{equation*}
 \mathbf{E}_{(x,z)\leftarrow (X,Z)}D''(x,z) \geqslant \mathbf{E}_{(x,z)\leftarrow \left(Y^{+},Z\right)}D''(x,z) + \epsilon.
\end{equation*}
We assumed that $(Y,Z)$  maximizes  $\mathbf{E}D'(Y,Z)$. Now we argue that $(Y,Z)$ is also maximal for $D''$. We know that for every $z$ the distribution $Y_z$ is flat over the set $\Max_{D'(\cdot,z)}^{k}$ of $2^k$ values of $x$ corresponding to largest values of $D'(x,z)$.  It is easy to see that $\Max_{D'(\cdot,z)}^{k} = \Max_{D''(\cdot,z)}^{k}$.
Therefore, we have shown in fact that
\begin{equation*}
 \mathbf{E}_{(x,z)\leftarrow (X,Z)}D''(x,z) - \max\limits_{(Y,Z):\,\mathbf{H}_{\infty}(Y|Z)\geqslant k}\mathbf{E}_{(x,z)\leftarrow (Y,Z)}D''(x,z) \geqslant \epsilon,
\end{equation*}
which means exactly that $\Hmtr{X|Z}{\{0,1\},s'}{\epsilon} <k$.
\end{proof}

\end{document}